\documentclass[3p,11pt]{elsarticle}

\usepackage{latexsym}
\usepackage{xspace}
\usepackage{amssymb}
\usepackage{amsmath}
\usepackage{stmaryrd}
\usepackage{hyperref}
\usepackage{url}
\usepackage{graphicx}
\input{xy}
\xyoption{all}
\usepackage{calc}
\usepackage{tikz}
\usepackage{flushend}
\usepackage{color}
\usepackage{framed}
\usepackage[ruled,vlined,linesnumbered]{algorithm2e}

\newtheorem{theorem}{Theorem}[section]
\newtheorem{lemma}[theorem]{Lemma}

\newtheorem{Definition}[theorem]{Definition}
\newtheorem{Example}[theorem]{Example}
\newtheorem{Remark}[theorem]{Remark}

\newenvironment{definition}{\begin{Definition}\begin{em}}{\end{em}\end{Definition}}
\newenvironment{example}{\begin{Example}\begin{em}}{\end{em}\end{Example}}

\newproof{proof}{Proof}

\def\eqref#1{(\ref{#1})}

\def\tuple#1{\langle#1\rangle}

\newcommand{\E}{\exists}

\newcommand{\fALC}{\mbox{$f\!\mathcal{ALC}$}\xspace}

\newcommand{\myend}{\mbox{}\hfill{\footnotesize$\Box$}}

\newcommand{\comment}[1]{}

\newcommand{\fand}{\varotimes}

\newcommand{\fto}{\Rightarrow}
\newcommand{\fequiv}{\Leftrightarrow}

\newcommand{\SV}{\Sigma_V}
\newcommand{\SE}{\Sigma_E}
\newcommand{\Vmin}{V_{min}}
\newcommand{\Vmax}{V_{max}}

\newcommand{\mN}{\mathcal{N}}

\newcommand{\Pred}{\mathit{Pred}}
\newcommand{\Succ}{\mathit{Succ}}
\newcommand{\done}{\mathit{done}}
\newcommand{\counter}{\mathit{count}}
\newcommand{\False}{\mathit{false}}
\newcommand{\True}{\mathit{true}}
\newcommand{\obj}{\mathit{obj}}
\newcommand{\smallestKey}{\mathit{smallestKey}}

\SetKwInput{Input}{Input}
\SetKwInput{Output}{Output}
\SetKwInput{Purpose}{Purpose}
\SetKw{Break}{break}
\SetKw{Continue}{continue}
\SetKw{Return}{return}
\SetKw{New}{new\,}

\journal{arXiv}

\begin{document}
\sloppy
	
\begin{frontmatter}
		
\title{Computing markings for fuzzy minimax nets over the G\"odel structure}

\author{Linh Anh Nguyen}
\ead{nguyen@mimuw.edu.pl}
\ead{nalinh@ntt.edu.vn}				
\address{Institute of Informatics, University of Warsaw, Banacha 2, 02-097 Warsaw, Poland}
\address{Faculty of Information Technology, Nguyen Tat Thanh University, Ho Chi Minh City, Vietnam}

\begin{abstract}
Fuzzy minimax nets were recently introduced as a tool for computing the greatest fuzzy bisimulation and simulation between two finite fuzzy graph-based structures. In this work, we provide an efficient algorithm for computing the greatest correct marking of a finite fuzzy minimax net over the G\"odel structure. Its time complexity is linear in the number of nodes and positive edges in the input net. Building on this result, we derive the first algorithm with time complexity $O((m+n)n)$ for computing the greatest fuzzy directed simulation between two finite fuzzy graphs over the G\"odel structure, where $n$ and $m$ denote the total numbers of vertices and positive edges, respectively, in the input graphs.
\end{abstract}

\begin{keyword}
fuzzy minimax nets \sep fuzzy directed simulation \sep fuzzy simulation \sep fuzzy bisimulation
\end{keyword}

\end{frontmatter}


\section{Introduction}
\label{section:intro}

Fuzzy minimax nets were introduced in~\cite{DBLP:journals/tfs/NguyenMS23} as a tool for computing the greatest fuzzy bisimulation and simulation between two finite fuzzy labeled graphs (FLGs). FLGs provide a common representation for fuzzy labeled transition systems, fuzzy automata, fuzzy Kripke models, fuzzy social networks, and fuzzy interpretations in description logic. A node in a fuzzy minimax net is either a min-node or a max-node. Each min-node is associated with an upper marking limit, which is a value in the unit interval $[0,1]$. Each edge connects a max-node to a min-node or vice versa and is assigned a fuzzy weight from $[0,1]$. The problem of computing the greatest fuzzy bisimulation (respectively, simulation) between two finite FLGs is reduced to computing the greatest correct marking of the fuzzy minimax net that corresponds to the given FLGs with respect to bisimulation (respectively, simulation). Using fuzzy minimax nets, the work~\cite{DBLP:journals/tfs/NguyenMS23} presented the first algorithms for computing the greatest fuzzy bisimulation and simulation between two finite FLGs over the product structure. It also introduced the first algorithms whose complexity order is independent of the fuzzy values occurring in the inputs for those tasks in the case of using the \L{}ukasiewicz structure.

The problem of computing the greatest correct marking of a fuzzy minimax net over the G\"odel structure was not investigated in~\cite{DBLP:journals/tfs/NguyenMS23}, because efficient algorithms for computing the greatest fuzzy bisimulation and simulation between two finite FLGs over the G\"odel structure were already available~\cite{TFS2020,DBLP:journals/isci/Nguyen23}. Nevertheless, studying this problem is worthwhile for the following reasons.
\begin{itemize}
\item First, fuzzy minimax nets are a recently introduced mathematical structure, and computing their greatest correct marking is one of their fundamental algorithmic problems. Consequently, this problem deserves investigation in its own right, independently of its applications to computing fuzzy behavioral relations.
\item Second, as shown in~\cite{FDSML}, fuzzy minimax nets can also be exploited to compute the greatest fuzzy directed simulation between two finite fuzzy Kripke models, a problem that had not previously been investigated.
\end{itemize}

In this work, we provide an efficient algorithm for computing the greatest correct marking of a finite fuzzy minimax net over the G\"odel structure. Its time complexity is linear in the number of nodes and positive edges in the input net. Building on this result, we derive an algorithm for computing the greatest fuzzy directed simulation between two finite fuzzy graphs over the G\"odel structure. It is efficient, with time complexity $O((m+n)n)$, where $n$ and $m$ denote the total numbers of vertices and positive edges, respectively, in the input graphs. This complexity bound is comparable to that of the algorithms for computing the greatest simulation between two finite crisp graphs~\cite{BloomP95,HenzingerHK95}.

The remainder of the paper is organized as follows. Section~\ref{section: prel} introduces the necessary preliminaries. Section~\ref{section: alg1} presents the algorithm for computing the greatest correct marking of a finite fuzzy minimax net over the G\"odel structure, together with its correctness proof and complexity analysis. Section~\ref{section: alg2} presents the algorithm for computing the greatest fuzzy directed simulation between two finite FLGs over the G\"odel structure, together with its correctness proof and complexity analysis. Section~\ref{section: impl and performance} describes an implementation of the algorithms and reports the results of the performance evaluation. Section~\ref{section: related work} discusses related work. Finally, Section~\ref{section: conc} concludes the paper.


\section{Preliminaries}
\label{section: prel}

In this section, we recall basic definitions concerning fuzzy sets, fuzzy relations, and fuzzy minimax nets. The presentation is adapted from~\cite{DBLP:journals/tfs/NguyenMS23}.

\subsection{Fuzzy sets and operators}
\label{sec: prel-1}

We use the fuzzy operators $\land, \lor, \fand, \fto: [0,1] \times [0,1] \to [0,1]$ defined as follows: 
\[
a \land b = \min(a,b), \qquad
a \lor b = \max(a,b), \qquad
a \fand b = a \land b, \qquad
(a \fto b) =
\begin{cases}
b & \text{if } a > b,\\
1 & \text{otherwise}.
\end{cases}
\]
They form the so-called G\"odel structure (of fuzzy truth values). These operations are called the \emph{meet}, \emph{join}, \emph{t-norm}, and \emph{residuum}, respectively.

A \emph{fuzzy subset} of a set $X$ is a function $f: X \to [0,1]$, which is also called a {\em fuzzy set}. 
It is \emph{empty} if $f(x) = 0$ for all $x \in X$. 
The {\em support} of $f$ is the set $\{x \in X \mid f(x) > 0\}$. 
For $x_1,\ldots,x_n \in X$ and $a_1,\ldots,a_n \in [0,1]$, by $\{x_1\!:\!a_1, \ldots, x_n\!:\!a_n\}$ we denote the fuzzy subset $f$ of $X$ such that $f(x_i) = a_i$ for each $1 \le i \le n$, and $f(x) = 0$ for all $x \in X \setminus \{x_1,\ldots,x_n\}$.

Given fuzzy subsets $f$ and $g$ of $X$, we say that $f$ is less than or equal to $g$, and write $f \leq g$, if $f(x) \leq g(x)$ for all $x \in X$. 
For a family $F$ of fuzzy subsets of $X$, by $\bigvee\!F$ we denote the fuzzy subset of $X$ defined by
$ (\bigvee\!F)(x) = \bigvee_{f \in F} f(x)$. 
We write $f \lor g$ to denote $\bigvee\{f,g\}$. 

A \emph{fuzzy relation} between $X$ and $Y$ is a fuzzy subset of $X \times Y$. 
Given a fuzzy relation \mbox{$\varphi: X \times Y \to [0,1]$}, the {\em converse} of $\varphi$ is the fuzzy relation \mbox{$\varphi^{-1} : Y \times X \to [0,1]$} defined by $\varphi^{-1}(y,x) = \varphi(x,y)$. 
Given fuzzy relations \mbox{$\varphi: X \times Y \to [0,1]$} and \mbox{$\psi: Y \times Z \to [0,1]$}, we define the {\em composition} $\varphi \circ \psi$ as the fuzzy relation $(\varphi \circ \psi): X \times Z \to [0,1]$ given by
\begin{equation}
(\varphi \circ \psi)(x,z) = \bigvee_{y \in Y} \left(\varphi(x,y) \fand \psi(y,z)\right).
\end{equation}


\subsection{Fuzzy minimax nets}
\label{sec: minimax}

A {\em (fuzzy) minimax net} is a structure $\mN = \tuple{\Vmin, \Vmax, E, L}$, where $\Vmin$ and $\Vmax$ are non-empty disjoint sets of {\em nodes}, consisting of {\em min-nodes} and {\em max-nodes}, respectively, $E: (\Vmin \times \Vmax) \cup (\Vmax \times \Vmin) \to [0,1]$ is a fuzzy set of {\em edges}, and $L: \Vmin \to [0,1]$ is the {\em marking limit for min-nodes}. We assume $E(x,y) = 0$ for $\tuple{x,y} \in (\Vmin \times \Vmin) \cup (\Vmax \times \Vmax)$. 
The net is {\em coimage-finite} if, for every $y \in \Vmin \cup \Vmax$, the set $\{x \in \Vmin \cup \Vmax \mid E(x,y) > 0\}$ is finite. 
A {\em positive} edge of $\mN$ is a pair $\tuple{x,y} \in (\Vmin \times \Vmax) \cup (\Vmax \times \Vmin)$ with $E(x,y) > 0$. 
By $|E|$ we denote the {\em size} of $E$, which is defined to be the number of positive edges of~$\mN$. 

An example of a minimax net is given in Figure~\ref{fig: HJRKA}. 

\begin{figure}[t]
\begin{center}
\begin{tabular}{|c|}
\hline
\begin{tikzpicture}[->,>=stealth,auto,black]
\node (top) {};
\node (v1) [draw, node distance=0.4cm, below of=top] {$v_1\!:\!0.5$};
\node (v2) [draw, node distance=1.8cm, below of=v1] {$v_2\!:\!1.0$};
\node (vm) [node distance=1.2cm, below of=v2] {};
\node (v3) [draw, node distance=1.2cm, below of=vm] {$v_3\!:\!0.9$};
\node (v4) [draw, node distance=1.8cm, below of=v3] {$v_4\!:\!1.0$};
\node (bot) [node distance=0.3cm, below of=v4] {};
\node (u)  [circle, draw, node distance=4.0cm, left of=vm] {$u$};
\node (w)  [circle, draw, node distance=4.0cm, right of=vm] {$w$};

\draw (v1) to node [above, pos=0.53, yshift=2]{\footnotesize{0.9}} (u);	
\draw (v2) to node [below]{\footnotesize{0.6}} (u);	
\draw (v3) to node [below]{\footnotesize{1}} (u);
\draw (u) to node [below, pos=0.46]{\footnotesize{1}} (v4);	

\draw (w) to node [above, pos=0.46, yshift=2]{\footnotesize{0.6}} (v1);	
\draw (w) edge[bend right=10] node [below, pos=0.60]{\footnotesize{0.9}} (v2);	
\draw (v2) edge[bend right=20] node[below, pos=0.5, xshift=0]{\footnotesize{0.7}} (w);
\draw (w) to node [below, pos=0.583]{\footnotesize{0.6}} (v3);
\draw (v4) to node [below, pos=0.45, yshift=-2]{\footnotesize{0.8}} (w);	
\end{tikzpicture}
\\
\hline
\end{tabular}
\caption{An illustration of a minimax net $\mN = \tuple{\Vmin, \Vmax, E, L}$, where $\Vmin = \{v_1,v_2,v_3,v_4\}$ (min-nodes are shown as rectangles) and $\Vmax = \{u,w\}$ (max-nodes are shown as circles). For each min-node $x$, the number displayed inside its rectangle (distinct from its label) denotes the marking limit $L(x)$. A positive edge $\tuple{x,y}$ is depicted as an arrow from $x$ to $y$ labeled by $E(x,y)$.\label{fig: HJRKA}}
\end{center}
\end{figure}

Given a minimax net $\mN = \tuple{\Vmin, \Vmax, E, L}$, a fuzzy set $M: \Vmin \cup \Vmax \to [0,1]$ is called a {\em marking} of $\mN$. It is a {\em correct marking} of $\mN$ if, for every $x \in \Vmin$ and $y \in \Vmax$, 
\begin{eqnarray}
	M(x) & \leq & L(x) \label{eq: JHDSD 1} \\
	M(x) & \leq & \bigwedge_{z \in \Vmax}\!\! (E(z,x) \fto M(z)) \label{eq: JHDSD 2} \\
	M(y) & \leq & \bigvee_{z \in \Vmin}\!\!(E(z,y) \fand M(z)). \label{eq: JHDSD 3} 
\end{eqnarray}
A marking $M$ of $\mN$ is {\em stable} if, for every $x \in \Vmin$ and $y \in \Vmax$, 
\begin{eqnarray}
	M(x) & = & L(x) \land \bigwedge_{z \in \Vmax}\!\!(E(z,x) \fto M(z)) \label{eq: JHDSD 4} \\
	M(y) & = & \bigvee_{z \in \Vmin}\!\!(E(z,y) \fand M(z)). \label{eq: JHDSD 5} 
\end{eqnarray}

Note that, by definition, stable markings are correct. 
It has been proved in~\cite{DBLP:journals/tfs/NguyenMS23} that every minimax net has the greatest correct marking and that the greatest correct marking of a minimax net is stable.

\begin{example}
Consider the minimax net $\mN$ specified in Figure~\ref{fig: HJRKA}. It is straightforward to verify that
\[ M = \{u\!:\!0.9, v_1\!:\!0.5, v_2\!:\!0.8, v_3\!:\!0.9, v_4\!:\!0.9, w\!:\!0.8\} \]
is a stable marking of $\mN$. By Example~\ref{example: HJRKA} and Theorem~\ref{theorem: JHDLZ}, stated in the next section, $M$ is the greatest correct marking of $\mN$.
\myend
\end{example}

\section{Computing the greatest correct marking of a minimax net}
\label{section: alg1}

In this section, we design an efficient algorithm for computing the greatest correct marking $M$ of a finite minimax net $\mN = \tuple{\Vmin, \Vmax, E, L}$ over the G\"odel structure. We begin by outlining the underlying idea, then present the algorithm formally and prove its correctness and complexity. We also provide an illustrative example.

Roughly speaking, the algorithm computes $M$ using the equations~\eqref{eq: JHDSD 4} and~\eqref{eq: JHDSD 5}. 
The infimum in~\eqref{eq: JHDSD 4} and the supremum in~\eqref{eq: JHDSD 5} are processed by considering the elements of the respective sets in an appropriate order. 

We denote $V = \Vmin \cup \Vmax$, and for each $x \in V$ define
\begin{eqnarray*}
\Succ(x) & = & \{y \in V \mid E(x,y) > 0\} \\
\Pred(x) & = & \{y \in V \mid E(y,x) > 0\}.
\end{eqnarray*}
These sets represent the successors and predecessors of~$x$, respectively. 
For each node $x \in \Vmin$, let $x.\done$ be a Boolean flag indicating whether $x$ has been processed. 
Similarly, for each positive edge $e = \tuple{z,y} \in \Vmin \times \Vmax$, let $e.\done$ be a Boolean flag indicating whether $e$ has been processed. 
For each node $y \in \Vmax$, let $y.\counter$ denote the number of predecessors of $y$ that remain to be processed. 

The marking $M$ is initialized as follows: for each $x \in \Vmin$, set $M(x) = L(x)$; for each $y \in \Vmax$, set $M(y) = 0$. In other words, $M$ is initialized to the least favorable values from the perspective of each node. The attributes $x.\done$, $y.\counter$, and $\tuple{z,y}.\done$, for nodes $x \in \Vmin$, $y \in \Vmax$, and positive edges $\tuple{z,y} \in \Vmin \times \Vmax$, are initialized in the standard way.

\begin{algorithm}[t!]
\caption{Computing the greatest correct marking\label{alg1}}
\Input{a finite minimax net $\mN = \tuple{\Vmin, \Vmax, E, L}$ over the G\"odel structure.}
\Output{the greatest correct marking of $\mN$.}

\BlankLine

let $V = \Vmin \cup \Vmax$ and initialize $M$ to the empty fuzzy subset of $V$\label{stm: alg1 1}\;
initialize $Q$ to the empty priority queue, where smaller keys correspond to higher priority\;

\ForEach{$x \in \Vmin$}{
    $M(x) := L(x)$, $x.\done := \False$\;
    insert $x$ into $Q$ using $M(x)$ as the key\label{stm: alg1 5}\;
}
\ForEach{$y \in \Vmax$}{
    $M(y) := 0$, $y.\counter := |\Pred(y)|$\;
    \lIf{$y.\counter = 0$}{insert $y$ into $Q$ using $M(y)$ as the key}
}
\ForEach{$\tuple{z,y} \in \Vmin \times \Vmax$ such that $E(z,y) > 0$}{
    $\tuple{z,y}.\done := \False$\;
    insert $\tuple{z,y}$ into $Q$ using $E(z,y)$ as the key\label{stm: alg1 11}\;
}

\While{$Q$ is not empty}{
    extract an object (a node or an edge) $\obj$ from $Q$\;
    \uIf{$\obj$ is a node $z \in \Vmax$\label{stm: alg1 14}}{
        \ForEach{$x \in \Succ(z)$ such that $\lnot x.\done$ and $E(z,x) > M(z)$\label{stm: alg1 15}}{
            $M(x) := M(z)$\;
            insert $x$ into $Q$ using $M(x)$ as the key\label{stm: alg1 17}\;
        }
    }
    \uElseIf{$\obj$ is a node $z \in \Vmin$ such that $\lnot z.\done$\label{stm: alg1 18}}{
        \ForEach{$y \in \Succ(z)$ such that $\lnot \tuple{z,y}.\done$\label{stm: alg1 19}}{
            $M(y) := M(z)$\;
            $\tuple{z,y}.\done := \True$\;
            $y.\counter := y.\counter - 1$\;
            \lIf{$y.\counter = 0$}{insert $y$ into $Q$ using $M(y)$ as the key}
        }
        $z.\done := \True$\label{stm: alg1 24}\;
    }
    \ElseIf{$\obj$ is an edge $\tuple{z,y}$ such that $\lnot \tuple{z,y}.\done$\label{stm: alg1 25}}{
            $M(y) := E(z,y)$\label{stm: alg1 26}\;
            $\tuple{z,y}.\done := \True$\;
            $y.\counter := y.\counter - 1$\;
            \lIf{$y.\counter = 0$}{insert $y$ into $Q$ using $M(y)$ as the key\label{stm: alg1 29}}
    }
}

\Return{$M$}\;
\end{algorithm}

Nodes in $V$ and positive edges in $\Vmin \times \Vmax$ are processed using a (common) priority queue, where smaller keys correspond to higher priority. Each positive edge $\tuple{z,y} \in \Vmin \times \Vmax$ with key $E(z,y)$ is inserted into the queue initially. A node $y \in \Vmax$ with key $M(y)$ is inserted into the queue whenever $y.\counter$ becomes $0$ (or initially, if $y.\counter = 0$). Each positive edge $\tuple{z,y} \in \Vmin \times \Vmax$ and each node $y \in \Vmax$ is inserted into the queue only once and thus processed exactly once. In contrast, a node $x \in \Vmin$ with key $M(x)$ may be inserted into the queue multiple times, including its initial insertion, but it is processed only once (namely, when $x.\done = \False$). 

After initialization, as the main loop of the algorithm, while the priority queue is not empty, an object (a node or an edge) is extracted from the queue and processed as follows:
\begin{itemize}
\item With~\eqref{eq: JHDSD 4} in mind, a node $z \in \Vmax$ is processed as follows: for each successor $x \in \Vmin$ of $z$ such that $x.\done = \False$ and $E(z,x) > M(z)$, $M(x)$ is set (i.e., reduced) to $M(z)$ and $x$ is subsequently inserted into the priority queue.

\item With~\eqref{eq: JHDSD 5} in mind, a node $z \in \Vmin$ with $z.\done = \False$ is processed as follows: for each successor $y \in \Vmax$ of $z$ such that $\tuple{z,y}.\done = \False$ (which implies $E(z,y) \geq M(z)$), $M(y)$ is set (i.e., increased) to $M(z)$, the value of $y.\counter$ is decreased by~1, and the flag $\tuple{z,y}.\done$ is set to $\True$. 

\item Furthermore, again with~\eqref{eq: JHDSD 5} in mind, a positive edge $\tuple{z,y} \in \Vmin \times \Vmax$ with $\tuple{z,y}.\done = \False$ (which implies $z.\done = \False$ and $M(z) \geq E(z,y)$) is processed as follows: $M(y)$ is set (i.e., increased) to $E(z,y)$, and the value of $y.\counter$ is decreased by~1.
\end{itemize}

The algorithm is formally presented as Algorithm~\ref{alg1} (on page~\pageref{alg1}). 

The following observations are noteworthy:
\begin{itemize}
\item For $z \in \Vmax$, the value $M(z)$ is ready for use in~\eqref{eq: JHDSD 4} only when it has reached its final value, that is, after all predecessors of $z$ have been considered, i.e., when $z.\counter = 0$. Indeed, only in this case is $z$ inserted into the queue and subsequently eligible for extraction and processing.

\item Consider the processing of a node $z \in \Vmax$, and let $x \in \Vmin$ be a successor of $z$ such that $x.\done = \False$ and $E(z,x) > M(z)$. Since $x$ has not yet been processed, it appears at least once in the priority queue, and its best occurrence has key equal to the current value of $M(x)$. Hence $M(x) \geq M(z) = (E(z,x) \fto M(z))$. 
Consequently, $M(z)$ can be taken as the final value of $M(x)$. Indeed, let $z' \in \Vmax$ be another predecessor of $x$. If $z'$ has already been processed, then either $M(z') = M(z)$ or $E(z',x) \leq M(z') < M(z)$, and thus $(E(z',x) \fto M(z')) \geq (E(z,x) \fto M(z))$. If $z'$ has not yet been processed, then $M(z') \geq M(z)$, and hence $(E(z',x) \fto M(z')) \geq M(z) = (E(z,x) \fto M(z))$. 
This justifies inserting $x$ into the priority queue, as it is ready for processing. Earlier occurrences of $x$ are not removed from the queue. Whenever a node $x \in \Vmin$ is extracted from the queue, it is ignored if $x.\done = \True$.
\end{itemize}

\begin{example}\label{example: HJRKA}
Reconsider the minimax net $\mN = \tuple{\Vmin, \Vmax, E, L}$ specified in Figure~\ref{fig: HJRKA}. 
We consider the application of Algorithm~\ref{alg1} to $\mN$. 
After initialization (the statements~\ref{stm: alg1 1}--\ref{stm: alg1 11}), we have:
\begin{itemize}
\item $M = \{v_1\!:\!0.5, v_2\!:\!1, v_3\!:\!0.9, v_4\!:\!1\}$, 
\item $u.\counter = 3$ and $w.\counter = 2$, 
\item $Q$ consists of all $x \in \Vmin$ and all $\tuple{z,y} \in \Vmin \times \Vmax$ with $E(z,y) > 0$, 
\item the $\done$ attributes of all $x \in \Vmin$ and all $\tuple{z,y} \in \Vmin \times \Vmax$ with $E(z,y) > 0$ are $\False$. 
\end{itemize}
The subsequent iterations of the main ``while'' loop proceed as follows: 
\begin{itemize}
\item {\bf Iteration 1:} The min-node $v_1$, whose key (given by $M$) is $0.5$, is extracted from~$Q$ and processed. As a result, the $\done$ attributes of this min-node and its outgoing edge $\tuple{v_1,u}$ are set to $\True$, $M(u)$ is increased to $0.5$, and $u.\counter$ is reduced to~2.

\item {\bf Iteration~2:} The edge $\tuple{v_2, u}$, whose key (given by $E$) is $0.6$, is extracted from $Q$ and processed. As a result, $M(u)$ is increased to $0.6$, $\tuple{v_2, u}.\done$ is set to $\True$, and $u.\counter$ is reduced to~1.

\item {\bf Iteration~3:} The edge $\tuple{v_2, w}$, with key $0.7$, is extracted from $Q$ and processed. As a result, $M(w)$ is increased to $0.7$, $\tuple{v_2, w}.\done$ is set to $\True$, and $w.\counter$ is reduced to~1.

\item {\bf Iteration~4:} The edge $\tuple{v_4, w}$, with key $0.8$, is extracted from $Q$ and processed. As a result, $M(w)$ is increased to $0.8$, $\tuple{v_4, w}.\done$ is set to $\True$, $w.\counter$ is reduced to~0, and $w$ is inserted into~$Q$.

\item {\bf Iteration~5:} The max-node $w$, whose key (given by $M$) is $0.8$, is extracted from~$Q$ and processed. As a result, $M(v_2)$ is reduced to $0.8$, and $v_2$ is inserted into~$Q$. 

\item {\bf Iteration~6:} The min-node $v_2$, with key $0.8$, is extracted from~$Q$ and processed. As a result, $v_2.\done$ is set to $\True$, with no further changes, since $\tuple{v_2,u}.\done$ and $\tuple{v_2,w}.\done$ are already $\True$.

\item {\bf Iteration~7:} The min-node $v_3$, with key $0.9$, is extracted from~$Q$ and processed. As a result, the $\done$ attributes of this min-node and its outgoing edge $\tuple{v_3,u}$ are set to $\True$, $M(u)$ is increased to $0.9$, $u.\counter$ is reduced to~0, and $u$ is inserted into~$Q$.

\item {\bf Iteration~8:} The max-node $u$, with key $0.9$, is extracted from~$Q$ and processed. As a result, $M(v_4)$ is reduced to $0.9$, and $v_4$ is inserted into~$Q$. 

\item {\bf Iteration~9:} The min-node $v_4$, with key $0.9$, is extracted from~$Q$ and processed. As a result, $v_4.\done$ is set to $\True$, with no further changes, since $\tuple{v_4,w}.\done$ is already $\True$.

\item {\bf Iteration~10:} The edge $\tuple{v_1, u}$, with key $0.9$, is extracted from $Q$ and ignored, since its $\done$ attribute is already $\True$.\footnote{This extraction could instead occur in the iteration 7, 8, or 9.}

\item {\bf Iteration~11:} The edge $\tuple{v_3, u}$, with key $1$, is extracted from $Q$ and ignored, since its $\done$ attribute is already $\True$.

\item {\bf Iterations~12 and~13:} The second occurrences of $v_2$ and $v_4$, whose keys are equal to 1, are extracted from $Q$ and ignored, since their $\done$ attributes are already $\True$.
\end{itemize}
The main ``while'' loop then terminates, and the algorithm returns the marking 
\[ M = \{u\!:\!0.9, v_1\!:\!0.5, v_2\!:\!0.8, v_3\!:\!0.9, v_4\!:\!0.9, w\!:\!0.8\}. \]
By Theorem~\ref{theorem: JHDLZ}, stated later in this section, this is the greatest correct marking of $\mN$ under the G\"odel semantics. 
\myend
\end{example}

We have implemented Algorithm~\ref{alg1} and made the source code publicly available~\cite{CompMinimaxG-prog}. The reader may use this implementation, together with the input file ``3.in'' (also provided at~\cite{CompMinimaxG-prog}), to verify the details of the above example.\footnote{To do so, replace the line {\em ``debug = False''} at the beginning of the module {\em CompMinimaxG.py} with {\em ``debug = True''}, then run the module and examine the output corresponding to the file ``3.in''. Remember to revert this change before running the performance tests.}
Example~\ref{example: JRIKA} in the next section provides another illustration of the execution of Algorithm~\ref{alg1}.

For the priority queue $Q$ used in Algorithm~\ref{alg1}, let $\smallestKey(Q)$ denote the infimum of the keys of the elements in~$Q$. 
The following lemma captures the essence of the main loop of Algorithm~\ref{alg1}.

\begin{lemma}\label{lemma: inv}
Let $M'$ be an arbitrary correct marking of $\mN$. 
The following assertions form an invariant of the ``while'' loop of Algorithm~\ref{alg1}:
\begin{enumerate}[(a)]
\item\label{inv: a} For every $x \in \Vmin$, 
    \begin{enumerate}[(i)]
    \item\label{inv: a-i} $M(x) = L(x) \fand \bigwedge \{E(z,x) \fto M(z) \mid z \in \Vmax, z.\counter = 0, z \notin Q\}$;
    \item\label{inv: a-ii} $M(x) \geq M'(x)$;
    \item\label{inv: a-iii} if $x.\done = \False$, then $x \in Q$;
    \item\label{inv: a-iv} if $x.\done = \True$, then $M(x) \leq \smallestKey(Q)$.
    \end{enumerate}
\item\label{inv: b} For every $y \in \Vmax$, 
    \begin{enumerate}[(i)]
    \item\label{inv: b-i} $M(y) = \bigvee \{E(z,y) \fand M(z) \mid z \in \Vmin, \tuple{z,y}.\done\}$;
    \item\label{inv: b-ii} $y.\counter = \#\{z \in \Vmin \mid E(z,y) > 0, \lnot\tuple{z,y}.\done\}$;
    \item\label{inv: b-iii} if $y \in Q$, then $y.\counter = 0$.
    \end{enumerate}
\item\label{inv: c} For every $\tuple{z,y} \in \Vmin \times \Vmax$ such that $E(z,y) > 0$,
    \begin{enumerate}[(i)]
    \item\label{inv: c-i} if $\tuple{z,y}.\done = \False$, then $\tuple{z,y} \in Q$;
    \item\label{inv: c-ii} if $z.\done = \True$, then $\tuple{z,y}.\done = \True$;
    \item\label{inv: c-iii} if $\tuple{z,y}.\done = \True$ and $\tuple{z,y} \in Q$, then $z.\done = \True$;
    \item\label{inv: c-iv} if $\tuple{z,y}.\done = \True$ and $\tuple{z,y} \notin Q$, then $E(z,y) \leq \smallestKey(Q)$.
    \end{enumerate}
\end{enumerate}
\end{lemma}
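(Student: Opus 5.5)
We argue by induction on the number of iterations of the ``while'' loop. First we check that all assertions hold right after initialization, and then that one iteration preserves them. Two facts are used throughout. First, extracting the minimum from $Q$ can only increase $\smallestKey(Q)$. Second, every insertion performed during an iteration uses a key at least the key of the extracted object. For a max-node $z\in\Vmax$ this holds because $M(x):=M(z)$. For a min-node $z$ it holds because $M(y):=M(z)$. For an edge it holds because $M(y):=E(z,y)$. Hence $\smallestKey(Q)$ is nondecreasing, and all ``$\leq \smallestKey(Q)$'' assertions ((a)(iv), (c)(iv)) are preserved once they hold for the objects just marked done or just removed.

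\textbf{Initialization.} After initialization we have $M(x)=L(x)$, no max-node with counter $0$ has left $Q$, and no edge is done. Hence (a)(i), (b)(i) (an empty join, equal to $0$) and (b)(ii) are immediate. (a)(ii) follows from~\eqref{eq: JHDSD 1}. The remaining assertions hold trivially.

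\textbf{Iterations.} We split into the four cases of the loop body. The case of an ignored min-node or done edge is trivial.

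\emph{Edge $\tuple{z,y}$ with $\lnot\tuple{z,y}.\done$.} By (c)(iii) and (c)(ii), $z.\done=\False$, so by (a)(iii) $z\in Q$ with key $M(z)$. Since the edge has minimal key, $M(z)\ge E(z,y)$. Moreover, $M(y)\le E(z,y)$: the previously done edges into $y$ contribute values $E(z',y)\fand M(z')$ that are at most their keys at processing time, which are at most the current key. So (b)(i) holds with $M(y):=E(z,y)=E(z,y)\fand M(z)$. Here we use that $M(z)$ does not change later, which is the content of (a)(i) together with the analysis of max-nodes below.

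\emph{Min-node $z$ with $\lnot z.\done$.} The value $M(z)$ is the minimal key, so for each pending edge $E(z,y)\ge M(z)$, and the same monotonicity gives (b)(i). Setting $z.\done$ makes (a)(iv) and (c)(ii) true.

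\emph{Max-node $z$.} Here $z.\counter=0$, so $M(z)$ is final by (b)(i),(ii). Removing $z$ from $Q$ adds the term $E(z,x)\fto M(z)$ to (a)(i). For $x$ with $x.\done=\True$ we need this term to be $\ge M(x)$, which holds since $M(x)\le\smallestKey(Q)\le M(z)$. For undone $x$ with $E(z,x)>M(z)$, the term equals $M(z)\le M(x)$ because $x\in Q$ with key $M(x)$; so the new meet is $M(z)$, matching the assignment. If $E(z,x)\le M(z)$ the term is $1$.

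\textbf{Main obstacle: (a)(ii).} We must show $M'(x)\le M(z)$ whenever $M(x)$ is lowered to $M(z)$. By~\eqref{eq: JHDSD 2} it suffices to show $M'(z)\le M(z)$, because then $M'(x)\le E(z,x)\fto M'(z)\le E(z,x)\fto M(z)=M(z)$. To get $M'(z)\le M(z)$ we use~\eqref{eq: JHDSD 3}: $M'(z)\le\bigvee_{z'}E(z',z)\fand M'(z')$. Each term is bounded using (a)(ii) for $z'$ and the fact that all edges into $z$ are done. A term involving a min-node $z'$ with $\tuple{z',z}$ done via the edge case is bounded by $E(z',z)$, which equals the value recorded. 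A term with $\tuple{z',z}$ done via processing $z'$ is bounded by $M(z')\ge M'(z')$, and the recorded value equals $E(z',z)\fand M(z')$ because $E\ge M(z')$ there. Either way $E(z',z)\fand M'(z')\le E(z',z)\fand M(z')\le M(z)$.

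The delicate point is to justify that $M(z')$, as recorded in (b)(i), is final. This requires checking that processed (done) min-nodes never have their marking changed again, which holds by the guard $\lnot x.\done$ in the max-node case. It also requires checking that, in the edge case, later changes to $M(z')$ keep it $\ge E(z',y)$: later lowering is to values $\ge$ the current smallest key, which is $\ge E(z',y)$. So (b)(i) should be read with the current $M$, and this monotonicity argument is what makes it stable.
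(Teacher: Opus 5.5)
Your proposal is correct and follows essentially the same route as the paper. Both argue by induction over loop iterations with a case split on the extracted object. Both bound the contributions of previously done edges by the monotonicity of $\smallestKey(Q)$, which the paper packages as (a)(iv), (c)(iii) and (c)(iv). Both obtain (a)(ii) from $M'(z)\le M(z)$ via \eqref{eq: JHDSD 3}, then \eqref{eq: JHDSD 2}.

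One slip: in the edge case, $M(z)$ does \emph{not} stay fixed afterwards, since $z$ is still undone and may be lowered (as $v_2$ is in Example~\ref{example: HJRKA}). Your final paragraph correctly replaces this with the fact that later lowerings keep $M(z)\ge E(z,y)$. That fact is exactly what preserves (b)(i) in the max-node case, which the paper only calls ``clear''.
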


\begin{proof}
It is straightforward to verify that all the assertions hold before the execution of the ``while'' loop. 
For example, the assertion~\eqref{inv: a-i} holds because $M(x) = L(x)$ and the set under the infimum operator is empty. Consequently, the assertion~\eqref{inv: a-ii} also holds. 

Now consider an arbitrary iteration of the ``while'' loop, and assume inductively that all the assertions hold prior to this iteration. We show that they continue to hold after the iteration is completed. Let $\obj$ be the object extracted from the priority queue $Q$. The following are the only cases that can occur:
\begin{itemize}
\item Case $\obj$ is a node $z \in \Vmin$ such that $z.\done = \True$, or an edge $\tuple{z,y}$ such that $\tuple{z,y}.\done = \True$: The extraction is the only modification, and it does not decrease $\smallestKey(Q)$. All assertions~\eqref{inv: a-i}--\eqref{inv: c-iv} therefore continue to hold after the iteration.

\item Case $\obj$ is a node $z \in \Vmax$: Apart from the extraction, the ``foreach'' loop~\ref{stm: alg1 15} is executed. It is clear that the assertions \eqref{inv: a-iii}, \eqref{inv: b-i}--\eqref{inv: b-iii}, and \eqref{inv: c-i}--\eqref{inv: c-iii} continue to hold after the iteration. For any $x$ satisfying the condition of the ``foreach'' loop~\ref{stm: alg1 15}, since $x.\done = \False$, by the induction assumption~\eqref{inv: a-iii}, we have $x \in Q$, and hence $M(x) \geq M(z)$. The insertion of such an $x$ to $Q$ does not decrease $\smallestKey(Q)$. Therefore, the assertions \eqref{inv: a-iv} and \eqref{inv: c-iv} continue to hold after the iteration. 

By the induction assumption \eqref{inv: b-iii}, we have $z.\counter = 0$. 
By the induction assumptions \eqref{inv: b-ii} and \eqref{inv: b-i}, it follows that $M(z) = \bigvee_{z' \in \Vmin} \left(E(z',z) \fand M(z')\right)$ before the iteration. 
Since $M'$ is a correct marking of $\mN$, we have $M'(z) \leq \bigvee_{z' \in \Vmin} \left(E(z',z) \fand M'(z')\right)$. 
By the induction assumption \eqref{inv: a-ii}, for every $z' \in \Vmin$, $M(z') \geq M'(z')$ before the iteration. Therefore, $M(z) \geq M'(z)$. 

Denote by $Z_0$ and $Z_1$ the sets $\{z' \in \Vmax \mid z'.\counter = 0, z' \notin Q\}$ before and after the extraction of $z$ from $Q$, respectively. Since $z.\counter = 0$, we have $Z_1 = Z_0 \cup \{z\}$. 

Consider the assertions \eqref{inv: a-i} and \eqref{inv: a-ii} in the case where $x.\done = \True$ or $E(z,x) \leq M(z)$. If $E(z,x) \leq M(z)$, then the value of 
\[ L(x) \fand \bigwedge \{E(z,x) \fto M(z) \mid z \in \Vmax, z.\counter = 0, z \notin Q\} \]
does not change during the iteration, and hence the assertions \eqref{inv: a-i} and \eqref{inv: a-ii} continue to hold. If $x.\done = \True$, then by the induction assumption \eqref{inv: a-iv}, we have $M(x) \leq M(z)$ before the iteration, which together with the induction assumption \eqref{inv: a-i} implies that $M(x)$ does not change during the iteration, and hence the assertions \eqref{inv: a-i} and \eqref{inv: a-ii} also continue to hold.

We now show that the assertions \eqref{inv: a-i} and \eqref{inv: a-ii} continue to hold after each iteration of the inner ``foreach'' loop~\ref{stm: alg1 15}. Consider an arbitrary iteration of this inner loop. We have $x.\done = \False$ and $E(z,x) > M(z)$. 
Before the considered iteration of the inner loop, we have:
\begin{itemize}
\item $M(x) = L(x) \fand \bigwedge \{E(z',x) \fto M(z') \mid z' \in Z_0\}$ by the induction assumption \eqref{inv: a-i}, 
\item $x \in Q$ by the induction assumption \eqref{inv: a-iii}, and hence $M(x) \geq M(z)$. 
\end{itemize}
Since $E(z,x) > M(z)$, it follows that 
\[ (E(z,x) \fto M(z)) = M(z) \leq L(x) \fand \bigwedge \{E(z',x) \fto M(z') \mid z' \in Z_0\}. \]
Since $Z_1 = Z_0 \cup \{z\}$, we further derive 
\[ M(z) = L(x) \fand \bigwedge \{E(z',x) \fto M(z') \mid z' \in Z_1\}. \]
Since $E(z,x) > M(z) \geq M'(z)$ and $M'$ is a correct marking, we have 
\[ M'(x) \leq (E(z,x) \fto M'(z)) = M'(z) \leq M(z). \] 
Therefore, after updating $M(x) := M(z)$ in the considered iteration of the inner ``foreach'' loop~\ref{stm: alg1 15}, we have 
\begin{eqnarray*}
M(x) & = & L(x) \fand \bigwedge \{E(z',x) \fto M(z') \mid z' \in Z_1\} \\
M(x) & \geq & M'(x). 
\end{eqnarray*}
This shows that the assertions \eqref{inv: a-i} and \eqref{inv: a-ii} hold after each iteration of the inner ``foreach'' loop~\ref{stm: alg1 15}, and hence also after the considered iteration of the outer ``while'' loop. 

\item Case $\obj$ is a node $z \in \Vmin$ such that $z.\done = \False$: Apart from the extraction, the ``foreach'' loop~\ref{stm: alg1 19} is executed, after which $z.\done$ is set to $\True$. Observe that this iteration does not decrease $\smallestKey(Q)$. It is clear that the assertions \eqref{inv: a-ii}--\eqref{inv: a-iv}, \eqref{inv: b-ii}, \eqref{inv: b-iii}, and \eqref{inv: c-i}--\eqref{inv: c-iii} continue to hold after the iteration. 
The assertion \eqref{inv: a-i} also continues to hold, since whenever a node $y \in \Vmax$ has $M(y)$ modified and $y.\counter$ decreased to 0, it is inserted into~$Q$. The assertion \eqref{inv: c-iv} likewise continues to hold, since before each iteration of the inner ``foreach'' loop~\ref{stm: alg1 19}, we have $\tuple{z,y}.\done = \False$, and by the induction assumption~\eqref{inv: c-i}, it follows that $\tuple{z,y} \in Q$. 

We now show that the assertion \eqref{inv: b-i} continues to hold after each iteration of the inner ``foreach'' loop~\ref{stm: alg1 19}. Consider an arbitrary iteration of this loop. Denote by $Z_0$ and $Z_1$ the sets $\{z' \in \Vmin \mid \tuple{z',y}.\done\}$ before and after this iteration. Then $Z_1 = Z_0 \cup \{z\}$. Before this iteration, we have $\tuple{z,y}.\done = \False$, and by the induction assumption \eqref{inv: c-i}, it follows that $\tuple{z,y} \in Q$, which implies $E(z,y) \geq M(z)$, and hence $E(z,y) \fand M(z) = M(z)$.    

Consider an arbitrary $z' \in Z_0$. We have $\tuple{z',y}.\done = \True$. If $\tuple{z',y} \in Q$, then by the induction assumption \eqref{inv: c-iii}, $z'.\done = \True$, and by the induction assumption \eqref{inv: a-iv}, it follows that $M(z') \leq M(z)$. If $\tuple{z',y} \notin Q$, then by the induction assumption \eqref{inv: c-iv}, $E(z',y) \leq M(z)$. In either case, $E(z',y) \fand M(z') \leq M(z)$. Therefore, 
\[ \bigvee \{E(z',y) \fand M(z') \mid z' \in Z_0\} \leq M(z). \]
Since $E(z,y) \fand M(z) = M(z)$ and $Z_1 = Z_0 \cup \{z\}$, it follows that 
\[ \bigvee \{E(z',y) \fand M(z') \mid z' \in Z_1\} = M(z). \]
Therefore, after updating $M(y) := M(z)$ and $\tuple{z,y}.\done := \True$ in the considered iteration of the inner ``foreach'' loop~\ref{stm: alg1 19}, we obtain 
\[ M(y) = \bigvee \{E(z',y) \fand M(z') \mid z' \in Z_1\}. \]
Thus, the assertion \eqref{inv: b-i} holds after each iteration of this loop, and hence also after the considered iteration of the outer ``while'' loop. 

\item Case $\obj$ is an edge $\tuple{z,y} \in \Vmin \times \Vmax$ such that $\tuple{z,y}.\done = \False$: Apart from the extraction, the statements \ref{stm: alg1 26}--\ref{stm: alg1 29} are executed. Observe that the considered iteration (of the ``while'' loop) does not decrease $\smallestKey(Q)$. It is clear that the assertions \eqref{inv: a-ii}--\eqref{inv: a-iv}, \eqref{inv: b-ii}, \eqref{inv: b-iii}, and \eqref{inv: c-i}--\eqref{inv: c-iv} continue to hold after the iteration. 
The assertion \eqref{inv: a-i} also continues to hold, since $y$ is inserted into~$Q$ when $y.\counter$ reaches 0. 

Denote by $Z_0$ and $Z_1$ the sets $\{z' \in \Vmin \mid \tuple{z',y}.\done\}$ before and after the iteration. Then $Z_1 = Z_0 \cup \{z\}$. 
Before the iteration, we have $\tuple{z,y}.\done = \False$, and by the induction assumption \eqref{inv: c-ii}, it follows that $z.\done = \False$, which by the induction assumption \eqref{inv: a-iii} implies that $z \in Q$, and consequently $M(z) \geq E(z,y)$. Thus, $E(z,y) \fand M(z) = E(z,y)$. 

Consider an arbitrary $z' \in Z_0$. We have $\tuple{z',y}.\done = \True$. If $\tuple{z',y} \in Q$, then by the induction assumption \eqref{inv: c-iii}, $z'.\done = \True$, and by the induction assumption \eqref{inv: a-iv}, it follows that $M(z') \leq E(z,y)$. If $\tuple{z',y} \notin Q$, then by the induction assumption \eqref{inv: c-iv}, $E(z',y) \leq E(z,y)$. In either case, $E(z',y) \fand M(z') \leq E(z,y)$. Therefore, 
\[ \bigvee \{E(z',y) \fand M(z') \mid z' \in Z_0\} \leq E(z,y). \]
Since $E(z,y) \fand M(z) = E(z,y)$ and $Z_1 = Z_0 \cup \{z\}$, it follows that 
\[ \bigvee \{E(z',y) \fand M(z') \mid z' \in Z_1\} = E(z,y). \]
Therefore, after updating $M(y) := E(z,y)$ and $\tuple{z,y}.\done := \True$, we obtain 
\[ M(y) = \bigvee \{E(z',y) \fand M(z') \mid z' \in Z_1\}. \]
That is, the assertion \eqref{inv: b-i} continues to hold after the considered iteration.
\myend
\end{itemize}
\end{proof}

Here we state the main theoretical result of this section:

\begin{theorem}\label{theorem: JHDLZ}
Given a finite minimax net $\mN = \tuple{\Vmin, \Vmax, E, L}$ as input, Algorithm~\ref{alg1} returns the greatest correct marking of $\mN$ under the G\"odel semantics. Moreover, assuming that the nodes in $\Vmin$ are sorted according to $L$ and that the positive edges in $\Vmin \times \Vmax$ are sorted according to $E$, Algorithm~\ref{alg1} can be implemented to run in linear time in $m$ and $n$, where $m = |E|$ and $n = |\Vmin \cup \Vmax|$.
\end{theorem}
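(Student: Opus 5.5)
The proof has two parts, correctness and complexity. For correctness, we use the invariant of Lemma~\ref{lemma: inv}. We first show termination and describe the state at exit. Each edge in $\Vmin\times\Vmax$ and each max-node is inserted into $Q$ at most once. A min-node $x$ is inserted once initially plus once per positive edge $\tuple{z,x}$ with $z\in\Vmax$, since each max-node is processed at most once. So the loop performs finitely many iterations and ends with $Q$ empty.

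At that point, \eqref{inv: a-iii} gives $x.\done=\True$ for all $x\in\Vmin$. By \eqref{inv: c-ii}, every positive edge $\tuple{z,y}\in\Vmin\times\Vmax$ is done. Then \eqref{inv: b-ii} gives $y.\counter=0$ for all $y\in\Vmax$. Each such $y$ was inserted into $Q$ when its counter hit $0$ (or initially), and $Q$ is now empty. Hence the set $\{z\in\Vmax\mid z.\counter=0, z\notin Q\}$ equals all of $\Vmax$. Now \eqref{inv: a-i} becomes exactly~\eqref{eq: JHDSD 4}. Likewise \eqref{inv: b-i}, with all edges done and non-positive edges contributing $0$, becomes~\eqref{eq: JHDSD 5}. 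So the returned $M$ is stable and hence correct.

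By \eqref{inv: a-ii}, $M(x)\ge M'(x)$ on $\Vmin$ for every correct marking $M'$. On $\Vmax$ we get $M(y)=\bigvee_z E(z,y)\fand M(z)\ge\bigvee_z E(z,y)\fand M'(z)\ge M'(y)$, using monotonicity of $\fand$ and~\eqref{eq: JHDSD 3} for $M'$. Thus $M$ is the greatest correct marking. The one delicate point here is confirming that every max-node really leaves $Q$ with counter $0$: a max-node with no predecessors is inserted initially, and the others are inserted exactly when the counter reaches $0$.

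For complexity, the main obstacle is the priority queue: a generic heap costs a logarithmic factor. The approach is to use the sortedness hypotheses and the monotonicity of keys. The proof of Lemma~\ref{lemma: inv} shows that each iteration does not decrease $\smallestKey(Q)$; in particular every inserted key is at least the key just extracted. So $Q$ behaves as a monotone queue.

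We implement it as a merge of several FIFO streams:
\begin{itemize}
\item the presorted list of min-nodes by $L$;
\item the presorted list of $\Vmin\times\Vmax$ edges by $E$;
\item a FIFO of dynamically inserted objects, namely max-nodes and re-inserted min-nodes.
\end{itemize}
Each inserted key equals the current minimum key ($M(z)$ of the extracted node, or $E(z,y)$ of the extracted edge) or is at least it. One must check that the dynamic insertions are nondecreasing over time. Insertions of $x$ with key $M(z)$ happen while extracting $z$ with that key. Insertions of max-nodes $y$ have key $M(y)$, which equals the key of the current extraction (line $M(y):=M(z)$ or $M(y):=E(z,y)$), except for the initial zero-counter max-nodes with key $0$, which go first. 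Hence dynamic keys are nondecreasing, and the FIFO stays sorted. Extract-min compares the three heads in $O(1)$.

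Counting work: each positive edge is scanned $O(1)$ times, in loop~\ref{stm: alg1 15} for edges out of max-nodes and in loop~\ref{stm: alg1 19} or the edge case for edges out of min-nodes. Each node is handled $O(1)$ times apart from re-insertions, which are bounded by the number of positive edges. Computing $|\Pred(y)|$ and the successor lists takes $O(m+n)$. The total time is $O(m+n)$.
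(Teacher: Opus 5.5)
Your proposal is correct and follows the paper's proof. Correctness goes through the invariant of Lemma~\ref{lemma: inv} at termination, and the linear bound rests on the same observation: every inserted key equals the key currently being extracted, so the queue needs only constant-time operations. The only differences are cosmetic: the paper merges everything into one sorted list and inserts at the front (since inserted keys are $\le \smallestKey(Q)$), whereas you keep the presorted streams separate with a FIFO for dynamic insertions, and you also make termination explicit.
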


\begin{proof}
This proof relies on the assertions of Lemma~\ref{lemma: inv}. Consider the moment when Algorithm~\ref{alg1} terminates. Then $Q$ is empty. By the assertion \eqref{inv: a-iii}, we have $x.\done = \True$ for all $x \in \Vmin$. By the assertion \eqref{inv: c-ii}, it follows that $\tuple{z,y}.\done = \True$ for all $\tuple{z,y} \in \Vmin \times \Vmax$ with $E(z,y) > 0$. Consequently, by the assertion \eqref{inv: b-ii}, we have $y.\counter = 0$ for all $y \in \Vmax$. Therefore, by the assertions \eqref{inv: a-i} and \eqref{inv: b-i}, for every $x \in \Vmin$ and $y \in \Vmax$, 
\begin{eqnarray*}
M(x) & = & L(x) \fand \bigwedge \{E(z,x) \fto M(z) \mid z \in \Vmax\} \\
M(y) & = & \bigvee \{E(z,y) \fand M(z) \mid z \in \Vmin\}.
\end{eqnarray*}
Thus, the conditions~\eqref{eq: JHDSD 1}--\eqref{eq: JHDSD 3} hold, and $M$ is a correct marking of~$\mN$. 
Let $M'$ be an arbitrary correct marking of $\mN$. 
By the assertion \eqref{inv: a-ii}, for every $x \in \Vmin$ and $y \in \Vmax$, we have 
\begin{eqnarray*}
&& M(x) \geq M'(x) \\
&& M(y) = \bigvee_{z \in \Vmin}\!\!\! \left(E(z,y) \fand M(z)\right) \geq \bigvee_{z \in \Vmin}\!\!\! \left(E(z,y) \fand M'(z)\right) \geq M'(y).
\end{eqnarray*}
Therefore, $M$ is the greatest correct marking of~$\mN$.

We now analyze the complexity of Algorithm~\ref{alg1} under the assumptions stated in the theorem. 

Observe that, after the initialization (the statements~\ref{stm: alg1 1}--\ref{stm: alg1 11}), whenever an object (a node or an edge) is inserted into the priority queue $Q$, its key is less than or equal to $\smallestKey(Q)$. 
Thus, $Q$ can be implemented as a sorted list such that, after initialization, extraction is performed by removing the first element of the list, and insertion is performed at the beginning of the list. 
Consequently, the initialization can be carried out in time $O(m + n)$.

For the main ``while'' loop, observe the following:
\begin{itemize}
\item During the execution of the algorithm, for every $y \in \Vmax$, $y.\counter$ is never increased, and $y$ is inserted into $Q$ only once when $y.\counter$ reaches~0. Therefore, the total number of iterations of the inner ``foreach'' loop~\ref{stm: alg1 15} is bounded by $m$. 
\item Every positive edge $\tuple{z,y} \in \Vmin \times \Vmax$ is inserted into $Q$ only during initialization and has $\tuple{z,y}.\done$ set to $\True$ only once. Therefore, the total number of iterations of the inner ``foreach'' loop~\ref{stm: alg1 19} is also bounded by $m$. 
\item The total number of times a node in $\Vmin$ is inserted into $Q$ (by the statements~\ref{stm: alg1 5} and~\ref{stm: alg1 17}) is bounded by \mbox{$n+m$}. 
\item By the observations given above, the number of iterations of the ``while'' loop is bounded by \mbox{$2m+n$}. 
\item Each extraction from and insertion into $Q$ is performed in constant time.
\end{itemize}
Combining these observations, we conclude that the ``while'' loop runs in time \mbox{$O(m + n)$}, and so does the entire algorithm.
\myend
\end{proof}

\section{An example of application}
\label{section: alg2}

In~\cite{FDSML}, we studied logical characterizations and the computation of fuzzy directed simulations for fuzzy modal logics over residuated lattices. We showed that the problem of computing the greatest fuzzy directed simulation between two finite fuzzy Kripke models can be reduced to computing the greatest correct marking of the corresponding fuzzy minimax net. In this section, as an application of Algorithm~\ref{alg1} presented in the previous section, we show that it can be used to design an efficient algorithm for computing the greatest fuzzy directed simulation between two finite fuzzy labeled graphs over the G\"odel structure, and hence also for computing the greatest fuzzy directed simulation between two finite fuzzy Kripke models over the G\"odel structure.

A {\em fuzzy (labeled) graph} \cite{DBLP:journals/corr/abs-2012-01845,DBLP:journals/isci/Nguyen23} is a structure $G = \tuple{V, E, L, \SV, \SE}$, where $V$ is a non-empty set of vertices, $\SV$ (respectively, $\SE$) is a set of vertex labels (respectively, edge labels), \mbox{$E: V \times \SE \times V \to [0,1]$} is called the fuzzy set of labeled edges, and $L: V \to (\SV \to [0,1])$ is called the labeling function of vertices. 
Given vertices $x,y \in V$, a vertex label $p \in \SV$, and an edge label $r \in \SE$, $L(x)(p)$ means the degree to which $p$ is a member of the label of~$x$, and $E(x,r,y)$ means the degree to which there is an edge from $x$ to $y$ labeled by~$r$. We call a triple $\tuple{x,r,y} \in  V \times \SE \times V$ with $E(x,r,y) > 0$ a {\em positive edge}. By $|E|$ we denote the cardinality of the set of positive edges. 
For $r \in \SE$, we write $E_r$ to denote the fuzzy subset of $V \times V$ such that $E_r(x,y) = E(x,r,y)$ for $x,y \in V$. 
The graph $G$ is {\em finite} if all the sets $V$, $\SV$ and $\SE$ are finite. It is {\em image-finite} if the set $\{y \mid E_r(x,y) > 0\}$ is finite for all $r \in \SE$ and $x \in V$. 

\begin{definition}\label{def: HJHHA}
Let $G = \tuple{V, E, L, \SV, \SE}$ and $G' = \tuple{V', E', L', \SV, \SE}$ be fuzzy graphs over the same signature $\tuple{\SV, \SE}$. 
A fuzzy relation $Z : V \times V' \to [0,1]$ is called a {\em fuzzy directed simulation} between $G$ and $G'$ if the following conditions hold for all $p \in \SV$, $r \in \SE$ and all possible values of the free variables:
\begin{eqnarray}
&& Z(x,x') \leq (L(x)(p) \fto L'(x')(p)) \label{eq: FDS1} \\
&& \E y' \in V'\ (Z(x,x') \fand E_r(x,y) \leq E'_r(x',y') \fand Z(y,y')) \label{eq: FDS2} \\
&& \E y \in V\ (Z(x,x') \fand E'_r(x',y') \leq E_r(x,y) \fand Z(y,y')). \label{eq: FDS3}
\end{eqnarray}
\end{definition}

The notion defined above differs from the notion of a fuzzy bisimulation between fuzzy graphs~\cite{DBLP:journals/isci/Nguyen23,DBLP:journals/tfs/NguyenMS23} in that $\fto$ is used in~\eqref{eq: FDS1} instead of~$\fequiv$. It differs from the notion of a fuzzy simulation between fuzzy graphs~\cite{DBLP:journals/isci/Nguyen23} or fuzzy Kripke models~\cite{DBLP:journals/cas/NguyenN22} in that the condition~\eqref{eq: FDS3} is included. It serves as a counterpart of the notion of a fuzzy directed simulation between two fuzzy Kripke models introduced in~\cite{FDSML}. 

\begin{figure}[t]
\begin{center}
\begin{tabular}{|c|}
\hline
\begin{tikzpicture}[->,>=stealth,auto,black]
\node (pocz) {};
\node (G) [node distance=1.8cm, right of=pocz] {$G$};
\node (Gp) [node distance=9.0cm, right of=G] {$G'$};
\node (u) [node distance=1.5cm, below of=pocz] {$u\!:\!p_{\,0.9}$};
\node (v) [node distance=3.6cm, right of=u] {$v\!:\!q_{\,0.4}$};
\node (up) [node distance=9.0cm, right of=u] {$u'\!:\!p_{\,0.9}$};
\node (vp) [node distance=9.0cm, right of=v] {$v'\!:\!q_{\,0.5}$};
\node (wp) [node distance=3.0cm, below of=up] {$w'\!:\!q_{\,0.6}$};

\draw (u) to node [above]{\footnotesize{0.7}} (v);	
\draw (up) to node [above]{\footnotesize{0.8}} (vp);	
\draw (up) to node [left]{\footnotesize{0.6}} (wp);	
\draw (wp) edge[bend left=10] node[above, pos=0.45, xshift=-3]{\footnotesize{0.8}} (vp);
\draw (vp) edge[bend left=10] node[below, pos=0.45, xshift=3]{\footnotesize{0.7}} (wp);

\draw (u) edge[out=125,in=55,looseness=8] node[above]{\footnotesize{0.8}} (u);
\draw (v) edge[out=125,in=55,looseness=8] node[above]{\footnotesize{0.6}} (v);
\draw (up) edge[out=125,in=55,looseness=6] node[above]{\footnotesize{0.7}} (up);
\end{tikzpicture}
\\
\hline
\end{tabular}
\caption{Illustration of the fuzzy graphs used in Example~\ref{example: HGNBS}.\label{fig: HGNBS}}
\end{center}
\end{figure}

\begin{example}\label{example: HGNBS}
Consider the fuzzy graphs $G = \tuple{V, E, L, \SV, \SE}$ and $G' = \tuple{V', E', L', \SV, \SE}$ depicted in Figure~\ref{fig: HGNBS} and specified as follows:
\begin{itemize}
\item $\SV = \{p,q\}$ and $\SE = \{r\}$,  
\item $V = \{u,v\}$, $L(u) = \{p\!:\!0.9\}$, $L(v) = \{q\!:\!0.4\}$, $E = \{\tuple{u,r,u}\!:\!0.8, \tuple{u,r,v}\!:\!0.7, \tuple{v,r,v}\!:\!0.6\}$,
\item $V' = \{u',v',w'\}$, $L'(u') = \{p\!:\!0.9\}$, $L'(v') = \{q\!:\!0.5\}$, $L'(w') = \{q\!:\!0.6\}$, \\
      $E' = \{\tuple{u',r,u'}\!:\!0.7, \tuple{u',r,v'}\!:\!0.8, \tuple{u',r,w'}\!:\!0.6, \tuple{v',r,w'}\!:\!0.7, \tuple{w',r,v'}\!:\!0.8\}$.
\end{itemize}

We determine the greatest fuzzy directed simulation $Z$ between $G$ and $G'$. By~\eqref{eq: FDS1}, we have $Z(u,v') = Z(u,w') = 0$. Analogously, applying~\eqref{eq: FDS1} with $p$ replaced by $q$, we obtain $Z(v,u') = 0$. 
By~\eqref{eq: FDS3}, we have 
\begin{align*}
Z(v,v') \fand E'_r(v',w') & \leq E_r(v,v) \fand Z(v,w'), \\
Z(v,w') \fand E'_r(w',v') & \leq E_r(v,v) \fand Z(v,v'),
\end{align*}
which yields 
\begin{align*}
Z(v,v') \fand 0.7 & \leq 0.6 \fand Z(v,w'), \\
Z(v,w') \fand 0.8 & \leq 0.6 \fand Z(v,v'),
\end{align*}
and hence $Z(v,v') \leq 0.6$ and $Z(v,w') \leq 0.6$. 
Since $Z(u,v') = 0$, it follows from~\eqref{eq: FDS3} that
\[ Z(u,u') \fand E'_r(u',v') \leq E_r(u,v) \fand Z(v,v'), \]
which implies $Z(u,u') \fand 0.8 \leq 0.7 \fand 0.6$, and thus $Z(u,u') \leq 0.6$. 
Putting all these together, we obtain 
\[ Z \leq \{\tuple{u,u'}\!:\!0.6, \tuple{v,v'}\!:\!0.6, \tuple{v,w'}\!:\!0.6\}. \]
It is straightforward to verify that the fuzzy relation on the right-hand side is a fuzzy directed simulation between $G$ and $G'$. Therefore, it is the greatest fuzzy directed simulation between $G$ and $G'$ (i.e., it coincides with~$Z$). 

The reader may also use our implementation~\cite{CompMinimaxG-prog}, together with the input files ``g6.in'' and ``g6p.in'', to verify that:
\begin{itemize}
\item the greatest fuzzy simulation between $G$ and $G'$ is 
$\{\tuple{u,u'}\!:\!0.7, \tuple{v,v'}\!:\!1, \tuple{v,w'}\!:\!1\}$,
\item the greatest fuzzy bisimulation between $G$ and $G'$ is 
$\{\tuple{u,u'}\!:\!0.4, \tuple{v,v'}\!:\!0.4, \tuple{v,w'}\!:\!0.4\}$.
\myend
\end{itemize}
\end{example}

\begin{definition}\label{def: HGFHA}
Let $G = \tuple{V, E, L, \SV, \SE}$ and $G' = \tuple{V', E', L', \SV, \SE}$ be fuzzy graphs over the same signature $\tuple{\SV, \SE}$. The {\em minimax net corresponding to $\tuple{G,G'}$ via directed simulation} is $\mN = \tuple{\Vmin, \Vmax, E'', L''}$ specified as follows:
\begin{itemize}
\item $\Vmin = V \times V'$, 
\item $\Vmax = (V \times V' \times \SE) \cup (V' \times V \times \SE)$, 
\item $E'': (\Vmin \times \Vmax) \cup (\Vmax \times \Vmin) \to [0,1]$ is the following fuzzy set
\[
	\begin{array}{l}
	\{\tuple{\tuple{y,y'},\tuple{x',y,r}}\!:\!E'_r(x',y') \mid y \in V, x',y' \in V', r \in \SE\}\ \lor \\
	\{\tuple{\tuple{x',y,r},\tuple{x,x'}}\!:\!E_r(x,y) \mid x,y \in V, x' \in V', r \in \SE\}\ \lor \\
	\{\tuple{\tuple{y,y'},\tuple{x,y',r}}\!:\!E_r(x,y) \mid x,y \in V, y' \in V', r \in \SE\}\ \lor \\ 
	\{\tuple{\tuple{x,y',r},\tuple{x,x'}}\!:\!E'_r(x',y') \mid x \in V, x',y' \in V', r \in \SE\}, 
	\end{array}
\]
\item $L'': \Vmin \to [0,1]$ is the fuzzy set specified as follows, for all $x \in V$ and $x' \in V'$:
\[
	L''(\tuple{x,x'}) = \bigwedge_{p \in \SV}\!\! (L(x)(p) \fto L'(x')(p)).
\] 
\end{itemize}
\end{definition}

The notion defined above differs from the notion of a minimax net bisimulatedly corresponding to $\tuple{G,G'}$~\cite{DBLP:journals/tfs/NguyenMS23} in that $\fto$ is used in the equation defining $L''$ instead of~$\fequiv$. It serves as a counterpart of the notion of the minimax net corresponding to a pair of fuzzy Kripke models via directed simulation~\cite{FDSML}. 

\begin{figure}[t!]
\begin{center}
\begin{tabular}{|c|}
\hline
\begin{tikzpicture}[->,>=stealth,auto,black]
\node (sta) {};
\node (uup) [draw, node distance=1.2cm, below of=sta] {$\tuple{u,u'}\!:\!1$}; 
    \node (xuup) [node distance=10cm, right of=uup] {};
    \node (uupr) [node distance=0.7cm, above of=xuup] {$\tuple{u,u',r}$};
    \node (upur) [node distance=0.7cm, below of=xuup] {$\tuple{u',u,r}$};
\node (uvp) [draw, node distance=2.8cm, below of=uup] {$\tuple{u,v'}\!:\!0$}; 
    \node (xuvp) [node distance=10cm, right of=uvp] {};
    \node (uvpr) [node distance=0.7cm, above of=xuvp] {$\tuple{u,v',r}$};
    \node (vpur) [node distance=0.7cm, below of=xuvp] {$\tuple{v',u,r}$};
\node (uwp) [draw, node distance=2.8cm, below of=uvp] {$\tuple{u,w'}\!:\!0$}; 
    \node (xuwp) [node distance=10cm, right of=uwp] {};
    \node (uwpr) [node distance=0.7cm, above of=xuwp] {$\tuple{u,w',r}$};
    \node (wpur) [node distance=0.7cm, below of=xuwp] {$\tuple{w',u,r}$};
\node (vup) [draw, node distance=2.8cm, below of=uwp] {$\tuple{v,u'}\!:\!0$}; 
    \node (xvup) [node distance=10cm, right of=vup] {};
    \node (vupr) [node distance=0.7cm, above of=xvup] {$\tuple{v,u',r}$};
    \node (upvr) [node distance=0.7cm, below of=xvup] {$\tuple{u',v,r}$};
\node (vvp) [draw, node distance=2.8cm, below of=vup] {$\tuple{v,v'}\!:\!1$}; 
    \node (xvvp) [node distance=10cm, right of=vvp] {};
    \node (vvpr) [node distance=0.7cm, above of=xvvp] {$\tuple{v,v',r}$};
    \node (vpvr) [node distance=0.7cm, below of=xvvp] {$\tuple{v',v,r}$};
\node (vwp) [draw, node distance=2.8cm, below of=vvp] {$\tuple{v,w'}\!:\!1$}; 
    \node (xvwp) [node distance=10cm, right of=vwp] {};
    \node (vwpr) [node distance=0.7cm, above of=xvwp] {$\tuple{v,w',r}$};
    \node (wpvr) [node distance=0.7cm, below of=xvwp] {$\tuple{w',v,r}$};
\node (stp) [node distance=1.0cm, below of=vwp] {};

\draw (uup) edge[bend left=2] node [above, pos=0.75, yshift=-1]{\footnotesize{0.7}} (upur);
\draw (uvp) to node [above, pos=0.48, yshift=-1]{\footnotesize{0.8}} (upur);
\draw (uvp) to node [below, pos=0.15, yshift=0]{\footnotesize{0.8}} (wpur);
\draw (uwp) to node [above, pos=0.05, yshift=0]{\footnotesize{0.6}} (upur);
\draw (uwp) to node [below, pos=0.63, yshift=1]{\footnotesize{0.7}} (vpur);

\draw (vup) edge[bend left=2] node [above, pos=0.85, yshift=-1]{\footnotesize{0.7}} (upvr);
\draw (vvp) to node [above, pos=0.15, yshift=-1]{\footnotesize{0.8}} (upvr);
\draw (vvp) to node [below, pos=0.07, yshift=1]{\footnotesize{0.8}} (wpvr);
\draw (vwp) to node [above, pos=0.18, yshift=0]{\footnotesize{0.6}} (upvr);
\draw (vwp) to node [below, pos=0.42, yshift=1]{\footnotesize{0.7}} (vpvr);

\draw (uup) edge[bend left=2] node [above, pos=0.8, yshift=-1]{\footnotesize{0.8}} (uupr);
\draw (uvp) to node [above, pos=0.6, yshift=-1]{\footnotesize{0.8}} (uvpr);
\draw (uwp) to node [below, pos=0.45, yshift=1]{\footnotesize{0.8}} (uwpr);

\draw (vup) to node [above, pos=0.04, yshift=2]{\footnotesize{0.7}} (uupr);
\draw (vvp) to node [above, pos=0.07, yshift=2]{\footnotesize{0.7}} (uvpr);
\draw (vwp) to node [above, pos=0.07, yshift=2]{\footnotesize{0.7}} (uwpr);

\draw (vup) edge[bend left=2] node [above, pos=0.08, yshift=-1]{\footnotesize{0.6}} (vupr);
\draw (vvp) to node [below, pos=0.8, yshift=1]{\footnotesize{0.6}} (vvpr);
\draw (vwp) to node [below, pos=0.5, yshift=1]{\footnotesize{0.6}} (vwpr);

\draw[dashed] (uupr) edge[bend left=2] node [below, pos=0.2, yshift=1]{\footnotesize{0.7}} (uup);	
\draw[dashed] (uvpr) to node [below, pos=0.7, yshift=1]{\footnotesize{0.8}} (uup);	
\draw[dashed] (uvpr) to node [above, pos=0.2, yshift=0]{\footnotesize{0.8}} (uwp);	
\draw[dashed] (uwpr) to node [below, pos=0.8, yshift=0]{\footnotesize{0.6}} (uup);	
\draw[dashed] (uwpr) to node [above, pos=0.35, yshift=-1]{\footnotesize{0.7}} (uvp);	

\draw[dashed] (vupr) edge[bend left=2] node [below, pos=0.18, yshift=1]{\footnotesize{0.7}} (vup);	
\draw[dashed] (vvpr) to node [below, pos=0.72, yshift=1]{\footnotesize{0.8}} (vup);	
\draw[dashed] (vvpr) to node [below, pos=0.37, yshift=1]{\footnotesize{0.8}} (vwp);	
\draw[dashed] (vwpr) to node [below, pos=0.77, yshift=0]{\footnotesize{0.6}} (vup);	
\draw[dashed] (vwpr) to node [below, pos=0.23, yshift=1]{\footnotesize{0.7}} (vvp);	

\draw[dashed] (upur) edge[bend left=2] node [below, pos=0.25, yshift=1]{\footnotesize{0.8}} (uup);	
\draw[dashed] (upvr) to node [below, pos=0.9, yshift=-3]{\footnotesize{0.7}} (uup);	
\draw[dashed] (upvr) edge[bend left=2] node [below, pos=0.35, yshift=1]{\footnotesize{0.6}} (vup);	

\draw[dashed] (vpur) to node [above, pos=0.6, yshift=-1]{\footnotesize{0.8}} (uvp);	
\draw[dashed] (vpvr) to node [below, pos=0.93, yshift=-3]{\footnotesize{0.7}} (uvp);	
\draw[dashed] (vpvr) to node [above, pos=0.08, yshift=-1]{\footnotesize{0.6}} (vvp);	

\draw[dashed] (wpur) to node [below, pos=0.55, yshift=1]{\footnotesize{0.8}} (uwp);	
\draw[dashed] (wpvr) to node [below, pos=0.96, yshift=-3]{\footnotesize{0.7}} (uwp);	
\draw[dashed] (wpvr) to node [below, pos=0.5, yshift=1]{\footnotesize{0.6}} (vwp);	
\end{tikzpicture}
\\
\hline
\end{tabular}
\caption{Illustration of the minimax net mentioned in Example~\ref{example: JAJAK}.\label{fig: JTKXH}}
\end{center}
\end{figure}

\begin{example}\label{example: JAJAK}
Reconsider the fuzzy graphs $G = \tuple{V, E, L, \SV, \SE}$ and $G' = \tuple{V', E', L', \SV, \SE}$ specified in Example~\ref{example: HGNBS} and depicted in Figure~\ref{fig: HGNBS}. The minimax net $\mN = \tuple{\Vmin, \Vmax, E'', L''}$ corresponding to $\tuple{G,G'}$ via directed simulation is illustrated in Figure~\ref{fig: JTKXH}. In this figure, min-nodes appear on the left side, whereas max-nodes are placed on the right. Each min-node is annotated with its marking limit. For clarity, only positive edges are displayed: edges in $\Vmin \times \Vmax$ are drawn with solid lines, while edges in $\Vmax \times \Vmin$ are drawn with dashed lines. The label assigned to an edge from a node $x$ to a node $y$ specifies the value $E(x,y)$. Note that the min-nodes $\tuple{u,v'}$, $\tuple{u,w'}$, and $\tuple{v,u'}$ have a marking limit of~0, so any edges incident to these nodes may be disregarded, as they have no influence on determining the greatest correct marking of~$\mN$.
\myend
\end{example}

The following lemma is a counterpart of \cite[Lemma~11]{DBLP:journals/tfs/NguyenMS23} and~\cite[Lemma~4.3]{FDSML}, which deal with fuzzy bisimulations between fuzzy graphs and fuzzy directed simulations between fuzzy Kripke models, respectively. Its proof proceeds analogously to that of \cite[Lemma~11]{DBLP:journals/tfs/NguyenMS23}. The lemma holds in a more general setting, where $\fand$ is an arbitrary t-norm.

\begin{lemma}\label{lemma: HDJHA2}
Let $G = \tuple{V, E, L, \SV, \SE}$ and $G' = \tuple{V', E', L', \SV, \SE}$ be image-finite fuzzy graphs over the same signature and let $\mN = \tuple{\Vmin, \Vmax, E'', L''}$ be the minimax net corresponding to $\tuple{G,G'}$ via directed simulation. Let $Z$ be a fuzzy subset of $V \times V'$ and let $M: \Vmin \cup \Vmax \to [0,1]$ be
\[
\begin{array}{l}
\!\!Z \lor \{\tuple{x',y,r} \!:\! (E'_r \circ Z^{-1})(x',y) \mid x' \in V', y \in V, r \in \SE\} \\
\!\!\quad \lor\, \{\tuple{x,y',r} \!:\! (E_r \circ Z)(x,y') \mid x \in V, y' \in V', r \in \SE\}.
\end{array}
\]
Then, $Z$ is a fuzzy directed simulation between $G$ and $G'$ iff $M$ is a correct marking of~$\mN$. 
\end{lemma}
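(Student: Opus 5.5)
We unfold the three correctness conditions \eqref{eq: JHDSD 1}--\eqref{eq: JHDSD 3} for the net of Definition~\ref{def: HGFHA} and the marking $M$ of the statement, and match them against conditions \eqref{eq: FDS1}--\eqref{eq: FDS3} of Definition~\ref{def: HJHHA}. The first step is to compute predecessors in $\mN$. A max-node $\tuple{x',y,r}$ has predecessors the min-nodes $\tuple{y,y'}$ with weight $E'_r(x',y')$. A max-node $\tuple{x,y',r}$ has predecessors the min-nodes $\tuple{y,y'}$ with weight $E_r(x,y)$. A min-node $\tuple{x,x'}$ has predecessors the max-nodes $\tuple{x',y,r}$ with weight $E_r(x,y)$ and the max-nodes $\tuple{x,y',r}$ with weight $E'_r(x',y')$.

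Next we treat the easy conditions. Condition \eqref{eq: JHDSD 1} for $\tuple{x,x'}$ reads $Z(x,x') \le \bigwedge_p (L(x)(p)\fto L'(x')(p))$, which by definition of the infimum is equivalent to \eqref{eq: FDS1} holding for all $p$. Condition \eqref{eq: JHDSD 3} at $\tuple{x',y,r}$ reads $M(\tuple{x',y,r}) \le \bigvee_{y'} (E'_r(x',y')\fand Z(y,y'))$, and the right-hand side is exactly $(E'_r\circ Z^{-1})(x',y)$. Similarly, at $\tuple{x,y',r}$ the right-hand side is $(E_r\circ Z)(x,y')$. Since $M$ on max-nodes is defined to be precisely these compositions, \eqref{eq: JHDSD 3} holds automatically for every $Z$.

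The core of the proof is \eqref{eq: JHDSD 2}. At $\tuple{x,x'}$ it states that for all $y,r$ we have $Z(x,x') \le E_r(x,y)\fto (E'_r\circ Z^{-1})(x',y)$, and for all $y',r$ we have $Z(x,x') \le E'_r(x',y')\fto (E_r\circ Z)(x,y')$. By the residuation property $a\le b\fto c \iff a\fand b\le c$, which holds for any left-continuous t-norm and its residuum, the first family is equivalent to
\[ Z(x,x')\fand E_r(x,y)\le \bigvee_{y'} \bigl(E'_r(x',y')\fand Z(y,y')\bigr), \]
and the second family to
\[ Z(x,x')\fand E'_r(x',y')\le \bigvee_{y} \bigl(E_r(x,y)\fand Z(y,y')\bigr). \]

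The main obstacle is turning these suprema into the existential forms \eqref{eq: FDS2} and \eqref{eq: FDS3}. This is where image-finiteness is needed. The set of $y'$ with $E'_r(x',y')>0$ is finite, and terms with $E'_r(x',y')=0$ contribute $0$. So the supremum is either $0$ or a maximum attained at some $y'$; in the case where it is $0$, any $y'$ witnesses the existential (since $V'$ is non-empty). The same argument applies to $y$ using image-finiteness of $G$. Conversely, an existential witness bounds the supremum from below. Hence each supremum inequality is equivalent to the corresponding existential condition, and the two families match \eqref{eq: FDS2} and \eqref{eq: FDS3} respectively. Combining the three parts shows that $Z$ is a fuzzy directed simulation if and only if $M$ is a correct marking.
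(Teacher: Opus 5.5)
Your proposal is correct and takes essentially the approach the paper points to; the paper itself only says the proof is analogous to \cite[Lemma~11]{DBLP:journals/tfs/NguyenMS23}. In that approach, \eqref{eq: JHDSD 3} holds with equality by the definition of $M$ on max-nodes, \eqref{eq: JHDSD 1} is exactly \eqref{eq: FDS1} for all $p$, and \eqref{eq: JHDSD 2} splits by residuation into two inequality families, which image-finiteness (with $V, V' \neq \emptyset$ covering the zero case) turns into \eqref{eq: FDS2} and \eqref{eq: FDS3}.
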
	

The following theorem shows that the problem of computing the greatest fuzzy directed simulation between two finite fuzzy graphs can be reduced to the problem of computing the greatest correct marking of a finite fuzzy minimax net. 
It is a counterpart of \cite[Theorem~12]{DBLP:journals/tfs/NguyenMS23} and~\cite[Theorem~4.4]{FDSML}, which deal with fuzzy bisimulations between fuzzy graphs and fuzzy directed simulations between fuzzy Kripke models, respectively. Its proof proceeds analogously to that of \cite[Theorem~12]{DBLP:journals/tfs/NguyenMS23}, using Lemma~\ref{lemma: HDJHA2} in place of \cite[Lemma~11]{DBLP:journals/tfs/NguyenMS23}. The theorem holds in a more general setting, where $\fand$ is an arbitrary t-norm.

\begin{theorem}\label{theorem: HDFMX 2}
Let $G = \tuple{V, E, L, \SV, \SE}$ and $G' = \tuple{V', E', L', \SV, \SE}$ be image-finite fuzzy graphs over the same signature and let $\mN = \tuple{\Vmin, \Vmax, E'', L''}$ be the minimax net corresponding to $\tuple{G,G'}$ via directed simulation. Let $M$ be the greatest correct marking of $\mN$. Then, $M|_{\Vmin}$ is the greatest fuzzy directed simulation between~$G$ and~$G'$.  
\end{theorem}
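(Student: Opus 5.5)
The plan is to derive the statement from Lemma~\ref{lemma: HDJHA2}, combined with the fact (recalled in Section~\ref{section: prel}) that the greatest correct marking $M$ of $\mN$ exists and is stable. Put $Z = M|_{\Vmin}$, and let $M_Z$ be the marking built from $Z$ as in Lemma~\ref{lemma: HDJHA2}. That is, $M_Z$ agrees with $Z$ on $\Vmin$, takes the value $(E'_r \circ Z^{-1})(x',y)$ at $\tuple{x',y,r}$, and takes the value $(E_r \circ Z)(x,y')$ at $\tuple{x,y',r}$. I would prove two claims: (1) $Z$ is a fuzzy directed simulation; (2) every fuzzy directed simulation $Z'$ satisfies $Z' \leq Z$.

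For (1), I will show that $M_Z = M$. By stability, equation~\eqref{eq: JHDSD 5} holds, so for a max-node $\tuple{x',y,r}$ we have $M(\tuple{x',y,r}) = \bigvee_{z \in \Vmin} E''(z,\tuple{x',y,r}) \fand M(z)$. Reading off Definition~\ref{def: HGFHA}, the only min-nodes with a positive edge into $\tuple{x',y,r}$ are the pairs $\tuple{y,y'}$, and that edge has weight $E'_r(x',y')$. Hence the supremum equals $\bigvee_{y'} E'_r(x',y') \fand Z(y,y') = (E'_r \circ Z^{-1})(x',y)$. The computation for $\tuple{x,y',r}$ is symmetric: its predecessors are the pairs $\tuple{y,y'}$ with weight $E_r(x,y)$, and the supremum gives $(E_r \circ Z)(x,y')$. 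So $M_Z = M$. Since $M$ is correct, Lemma~\ref{lemma: HDJHA2} then says that $Z$ is a fuzzy directed simulation. Image-finiteness matters only in that these suprema are taken over sets with finitely many nonzero terms. This is consistent with the hypotheses of the lemma, so I will just invoke it.

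For (2), take any fuzzy directed simulation $Z'$. Lemma~\ref{lemma: HDJHA2} makes $M_{Z'}$ a correct marking, and because $M$ is the greatest correct marking, $M_{Z'} \leq M$. Restricting to $\Vmin$ gives $Z' \leq Z$.

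The only step that needs care is the bookkeeping in (1): checking from the four components of $E''$ which edges enter each kind of max-node. Two points have to be confirmed. First, the edges into $\tuple{x',y,r}$ come only from the first component, and those into $\tuple{x,y',r}$ only from the third. Second, the two families of max-nodes do not clash. I would check this by noting that they are distinct tuple types, elements of $V'\times V\times\SE$ versus $V\times V'\times\SE$, treated as a disjoint union. No other obstacle is expected, since the argument uses no property of $\fand$ beyond the lemma. This is consistent with the remark that the theorem holds for any t-norm.
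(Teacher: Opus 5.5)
Your proposal is correct and is essentially the intended proof, which the paper gives by analogy with Theorem~12 of the earlier minimax-net paper. Stability of $M$ identifies $M$ with the marking built from $Z = M|_{\Vmin}$ as in Lemma~\ref{lemma: HDJHA2}, so $Z$ is a fuzzy directed simulation. Applying the lemma to an arbitrary fuzzy directed simulation $Z'$, together with maximality of $M$, gives $Z' \leq Z$.

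One small correction to your side remark on image-finiteness: the supremum computation in step (1) needs no finiteness at all. Image-finiteness is used only inside the lemma, to turn those suprema into the existential witnesses required by \eqref{eq: FDS2} and \eqref{eq: FDS3}.
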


\begin{algorithm}[t]
\caption{Computing the greatest fuzzy directed simulation\label{alg2}}
\Input{finite fuzzy graphs $G = \tuple{V, E, L, \SV, \SE}$ and $G' = \tuple{V', E', L', \SV, \SE}$.}
\Output{the greatest fuzzy directed simulation between $G$ and $G'$ under the G\"odel semantics.}

\BlankLine

construct the minimax net $\mN = \tuple{\Vmin, \Vmax, E'', L''}$ corresponding to $\tuple{G,G'}$ via directed simulation, with $\Vmin$ sorted according to $L''$ and the positive edges in $\Vmin \times \Vmax$ sorted according to~$E''$\;
apply Algorithm~\ref{alg1} to $\mN$ and let $M$ be the obtained result\;
\Return{$M|_{\Vmin}$}\;
\end{algorithm}

Based on the above theorem and following the method from~\cite{FDSML}, Algorithm~\ref{alg2} computes the greatest fuzzy directed simulation between two finite fuzzy graphs $G = \tuple{V, E, L, \SV, \SE}$ and $G' = \tuple{V', E', L', \SV, \SE}$ under the G\"odel semantics as follows. It first constructs the minimax net $\mN = \tuple{\Vmin, \Vmax, E'', L''}$ corresponding to $\tuple{G,G'}$ via directed simulation, with $\Vmin$ sorted according to $L''$ and the positive edges in $\Vmin \times \Vmax$ sorted according to~$E''$. It then applies Algorithm~\ref{alg1} to $\mN$ to compute the greatest correct marking $M$ under the G\"odel semantics. Finally, it returns $M|_{\Vmin}$.

\begin{example}\label{example: JRIKA}
Reconsider the fuzzy graphs $G = \tuple{V, E, L, \SV, \SE}$ and $G' = \tuple{V', E', L', \SV, \SE}$ specified in Example~\ref{example: HGNBS}, 
and the minimax net $\mN = \tuple{\Vmin, \Vmax, E'', L''}$ corresponding to $\tuple{G,G'}$ via directed simulation, illustrated in Figure~\ref{fig: JTKXH} and discussed in Example~\ref{example: JAJAK}. 

In order to compute the greatest fuzzy directed simulation between $G$ and $G'$, we consider the application of Algorithm~\ref{alg1} to $\mN$. 
After initialization (the statements~\ref{stm: alg1 1}--\ref{stm: alg1 11}), we have:
\begin{itemize}
\item $M = \{\tuple{u,u'}\!:\!1, \tuple{v,v'}\!:\!1, \tuple{v,w'}\!:\!1\}$, 
\item $x.\done = \False$ for all $x \in \Vmin$, 
\item $y.\counter = |\Pred(y)|$ for all $y \in \Vmax$, 
\item $\tuple{z,y}.\done = \False$ for all $\tuple{z,y} \in \Vmin \times \Vmax$ with $E''(z,y) > 0$, 
\item $Q$ consists of all $x \in \Vmin$ and all $\tuple{z,y} \in \Vmin \times \Vmax$ with $E''(z,y) > 0$.
\end{itemize}
The subsequent iterations of the main ``while'' loop may proceed as follows (depending on the order in which elements with equal keys are extracted from the priority queue~$Q$), disregarding those iterations in which none of the blocks \ref{stm: alg1 15}--\ref{stm: alg1 17}, \ref{stm: alg1 19}--\ref{stm: alg1 24}, or \ref{stm: alg1 26}--\ref{stm: alg1 29} is executed:
\begin{itemize}
\item The min-nodes $\tuple{u,v'}$, $\tuple{u,w'}$, and $\tuple{v,u'}$, whose keys (given by $M$) are equal to~$0$, are extracted from~$Q$ and processed. The effects are as follows:
    \begin{itemize}
    \item the $\done$ attributes of these min-nodes and all their outgoing edges are set to $\True$; 
    \item $\tuple{u',v,r}.\counter$ is reduced to~2;  
    \item the $\counter$ attributes of $\tuple{u,u',r}$, $\tuple{u',u,r}$, $\tuple{u,v',r}$, and $\tuple{u,w',r}$ are reduced to~1;
    \item the $\counter$ attributes of $\tuple{v',u,r}$, $\tuple{w',u,r}$, and $\tuple{v,u',r}$ are reduced to 0, and these max-nodes are inserted into~$Q$.
    \end{itemize}

\item The max-nodes $\tuple{v',u,r}$, $\tuple{w',u,r}$, and $\tuple{v,u',r}$, whose keys (given by~$M$) are equal to~$0$, are extracted from~$Q$ and processed without any further effect.


\item The edge $\tuple{\tuple{v,v'}, \tuple{v,v',r}}$, whose key (given by $E''$) is equal to $0.6$, is extracted from $Q$ and processed. As a result, $M(\tuple{v,v',r})$ is increased to $0.6$, the $\done$ attribute of the edge is set to $\True$, $\tuple{v,v',r}.\counter$ is reduced to~0, and $\tuple{v,v',r}$ is inserted into~$Q$.

\item The edge $\tuple{\tuple{v,w'}, \tuple{u',v,r}}$, with key $0.6$, is extracted from $Q$ and processed. As a result, $M(\tuple{u',v,r})$ is increased to $0.6$, the $\done$ attribute of the edge is set to $\True$, and $\tuple{u',v,r}.\counter$ is reduced to~1.

\item The edge $\tuple{\tuple{v,w'}, \tuple{v,w',r}}$, with key $0.6$, is extracted from $Q$ and processed. As a result, $M(\tuple{v,w',r})$ is increased to $0.6$, the $\done$ attribute of the edge is set to $\True$, $\tuple{v,w',r}.\counter$ is reduced to~0, and $\tuple{v,w',r}$ is inserted into~$Q$.

\item The max-node $\tuple{v,v',r}$, with key $0.6$, is extracted from~$Q$ and processed. As a result, $M(\tuple{v,w'})$ is reduced to $0.6$, and $\tuple{v,w'}$ is inserted into~$Q$. 

\item The max-node $\tuple{v,w',r}$, with key $0.6$, is extracted from~$Q$ and processed. As a result, $M(\tuple{v,v'})$ is reduced to $0.6$, and $\tuple{v,v'}$ is inserted into~$Q$. 

\item The min-node $\tuple{v,w'}$, with key $0.6$, is extracted from~$Q$ and processed. The effects are as follows:
    \begin{itemize}
    \item the $\done$ attributes of this min-node and its outgoing edges $\tuple{\tuple{v,w'},\tuple{u,w',r}}$ and $\tuple{\tuple{v,w'},\tuple{v',v,r}}$ are set to $\True$; 
    \item $M(\tuple{u,w',r})$ and $M(\tuple{v',v,r})$ are increased to $0.6$;
    \item the $\counter$ attributes of $\tuple{u,w',r}$ and $\tuple{v',v,r}$ are reduced to~0, and these max-nodes are inserted into~$Q$.
    \end{itemize}

\item The min-node $\tuple{v,v'}$, with key $0.6$, is extracted from~$Q$ and processed. The effects are as follows:
    \begin{itemize}
    \item the $\done$ attributes of this min-node and its outgoing edges $\tuple{\tuple{v,v'},\tuple{u,v',r}}$, $\tuple{\tuple{v,v'},\tuple{u',v,r}}$, and $\tuple{\tuple{v,v'},\tuple{w',v,r}}$ are set to $\True$; 
    \item $M(\tuple{u,v',r})$ and $M(\tuple{w',v,r})$ are increased to $0.6$;
    \item the $\counter$ attributes of $\tuple{u,v',r}$, $\tuple{u',v, r}$, and $\tuple{w',v,r}$ are reduced to~0, and these max-nodes are inserted into~$Q$.
    \end{itemize}

\item The max-nodes $\tuple{u,w',r}$ and $\tuple{v',v,r}$, with key $0.6$, are extracted from~$Q$ and processed without any further effect.

\item The max-node $\tuple{u,v',r}$, with key $0.6$, is extracted from~$Q$ and processed. As a result, $M(\tuple{u,u'})$ is reduced to~$0.6$, and $\tuple{u,u'}$ is inserted into~$Q$.

\item The max-node $\tuple{u',v,r}$, with key $0.6$, is extracted from~$Q$ and processed. As a result, $\tuple{u,u'}$ is inserted into~$Q$ again.

\item The max-node $\tuple{w',v,r}$, with key $0.6$, is extracted from~$Q$ and processed without any further effect. 

\item The min-node $\tuple{u,u'}$, with key $0.6$, is extracted from~$Q$ and processed. The effects are as follows:
    \begin{itemize}
    \item the $\done$ attributes of this min-node and its outgoing edges $\tuple{\tuple{u,u'},\tuple{u,u',r}}$ and $\tuple{\tuple{u,u'},\tuple{u',u,r}}$ are set to $\True$; 
    \item $M(\tuple{u,u',r})$ and $M(\tuple{u',u,r})$ are increased to $0.6$;
    \item the $\counter$ attributes of $\tuple{u,u',r}$ and $\tuple{u',u,r}$ are reduced to~0, and these max-nodes are inserted into~$Q$.
    \end{itemize}


\item The max-nodes $\tuple{u,u',r}$ and $\tuple{u',u,r}$, with key $0.6$, are extracted from~$Q$ and processed without any further effect. 
\end{itemize}
Further iterations of the ``while'' loop do not produce any additional changes beyond extracting elements from~$Q$. The algorithm then returns 
\begin{align*}
M = \{ & \tuple{u,u'}\!:\!0.6, \tuple{v,v'}\!:\!0.6, \tuple{v,w'}\!:\!0.6, \tuple{v,v',r}\!:\!0.6, \tuple{u',v,r}\!:\!0.6, \tuple{v,w',r}\!:\!0.6, \\
& \tuple{u,w',r}\!:\!0.6, \tuple{v',v,r}\!:\!0.6, \tuple{u,v',r}\!:\!0.6, \tuple{w',v,r}\!:\!0.6, \tuple{u,u',r}\!:\!0.6, \tuple{u',u,r}\!:\!0.6 \}.
\end{align*}

Therefore, applying Algorithm~\ref{alg2} to $G$ and $G'$, we obtain 
\[ 
    M|_{\Vmin} = \{\tuple{u,u'}\!:\!0.6, \tuple{v,v'}\!:\!0.6, \tuple{v,w'}\!:\!0.6\}, 
\]
which, according to Theorem~\ref{theorem: HDFMX 2}, is the greatest fuzzy directed simulation between~$G$ and~$G'$. 
This is consistent with Example~\ref{example: HGNBS}.
\myend
\end{example}


The following theorem is a key result of this section, stating that Algorithm~\ref{alg2} correctly and efficiently computes the greatest fuzzy directed simulation between two finite fuzzy graphs under the G\"odel semantics.

\begin{theorem}\label{theorem: JHDKJ}
Algorithm~\ref{alg2} is correct. That is, given finite fuzzy graphs $G = \tuple{V, E, L, \SV, \SE}$ and $G' = \tuple{V', E', L', \SV, \SE}$, it returns the greatest fuzzy directed simulation between $G$ and $G'$ under the G\"odel semantics. Moreover, Algorithm~\ref{alg2} can be implemented so that its time complexity is $O((m+n)n)$, where $n = |V| + |V'|$ and $m = |E| + |E'|$, assuming that the signature $\tuple{\SV,\SE}$ is fixed.
\end{theorem}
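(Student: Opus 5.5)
Correctness follows by combining two earlier results. By Theorem~\ref{theorem: JHDLZ}, Algorithm~\ref{alg1} applied to $\mN$ returns the greatest correct marking $M$ of $\mN$ under the G\"odel semantics. Finite graphs are image-finite, so Theorem~\ref{theorem: HDFMX 2} applies and $M|_{\Vmin}$ is the greatest fuzzy directed simulation between $G$ and $G'$. Theorem~\ref{theorem: HDFMX 2} is stated for an arbitrary t-norm, so it holds in particular for the G\"odel t-norm. The substantive part of the proof is therefore the complexity bound.

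For complexity, I first bound the size of $\mN$. We have $|\Vmin| = |V|\cdot|V'| \le n^2$ and $|\Vmax| = 2|V||V'||\SE| = O(n^2)$, since $\SE$ is fixed. I count positive edges of $E''$ by the four families in Definition~\ref{def: HGFHA}.
\begin{itemize}
\item Edges $\tuple{\tuple{y,y'},\tuple{x',y,r}}$ with $E'_r(x',y')>0$ are indexed by a positive edge $\tuple{x',r,y'}$ of $G'$ together with a vertex $y\in V$. There are at most $|E'|\cdot|V|$ of them.
\item The second family is indexed by a positive edge of $G$ and a vertex $x' \in V'$, giving at most $|E|\cdot|V'|$ edges.
\item The third family is likewise bounded by $|E|\cdot|V'|$.
\item The fourth family is bounded by $|E'|\cdot|V|$.
\end{itemize}
Hence $|E''| = O(mn)$, and the net has $O(n^2)$ nodes. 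Moreover, $\mN$ can be built in time $O(mn+n^2)$ by iterating over the positive edges of each graph and the vertices of the other graph. This includes the $\Pred$/$\Succ$ adjacency lists, and each $L''(\tuple{x,x'})$ is computed in constant time since $\SV$ is fixed.

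The main obstacle is meeting the sorting hypothesis of Theorem~\ref{theorem: JHDLZ} within the budget. Naively sorting $\Vmin$ by $L''$ and the $O(mn)$ edges in $\Vmin\times\Vmax$ by $E''$ costs an extra logarithmic factor. I avoid it in two steps.
\begin{enumerate}
\item The weights $E''$ on edges in $\Vmin\times\Vmax$ are all values $E_r(x,y)$ or $E'_r(x',y')$, that is, among the at most $m$ edge weights of $G$ and $G'$. I sort these $m$ distinct values once, in $O(m\log m)\subseteq O(mn)$ time since $\log m = O(\log n)$ and $m$ may be assumed $\le |\SE|n^2$. I then assign each value its rank, which is an integer in $[1,m]$.
\item The values of $L''$ come from a set of size $O(n)$ for fixed $\SV$: each is $1$ or some $L'(x')(p)$. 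I rank these in $O(n\log n)$ time.
\end{enumerate}
With integer keys bounded by $O(m+n)$, a bucket (counting) sort orders the $O(n^2)$ min-nodes and the $O(mn)$ edges in $O(mn+n^2+m)$ time. Alternatively, I would generate the edges of each family by scanning the graph edges in sorted order and then merge the four sorted streams. Either way the hypothesis of Theorem~\ref{theorem: JHDLZ} holds, and Algorithm~\ref{alg1} runs in time linear in $|E''|+|\Vmin\cup\Vmax| = O(mn+n^2) = O((m+n)n)$.

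Summing construction, sorting and Algorithm~\ref{alg1} gives $O((m+n)n)$ overall. Extracting $M|_{\Vmin}$ adds only $O(n^2)$.
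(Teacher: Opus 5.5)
Your proposal is correct and follows essentially the same route as the paper. Correctness comes from Theorems~\ref{theorem: JHDLZ} and~\ref{theorem: HDFMX 2}, and the complexity argument bounds $\mN$ by $O(n^2)$ nodes and $O(mn)$ edges, then avoids the logarithmic factor by bucket-sorting keyed on the pre-sorted edge weights of $G,G'$ and the values $L'(x')(p)$ (plus $1$), exactly as in the paper. One minor slip: only the first and third edge families lie in $\Vmin\times\Vmax$, so only two streams need to be sorted and merged, not four.
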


\begin{proof}
The correctness of Algorithm~\ref{alg2} follows directly from Theorem~\ref{theorem: HDFMX 2}.

We now analyze the time complexity of Algorithm~\ref{alg2}. We have $|\Vmin \cup \Vmax| = O(n^2)$ and $|E''| = O(mn)$. During the construction of $\mN$, we assume that $\Vmin$ and the positive edges in $\Vmin \times \Vmax$ are sorted as described below.

As a preprocessing step, each pair $\tuple{x',p} \in V' \times \SV$ is assigned to a bucket identified by $L'(x')(p)$. Then, during the construction of $\Vmin$ and $L''$, each node $\tuple{x,x'} \in \Vmin$ is inserted in constant time into the bucket identified by $L''(\tuple{x,x'})$, which is implemented as a list. No sorting is required within individual buckets; it suffices to sort the buckets according to their keys and then concatenate them. This preprocessing and sorting can be performed in time $O(n \log n)$. Hence, the construction of $\Vmin$, $\Vmax$, and $L''$, together with sorting $\Vmin$, can be done in time $O(n^2)$.

Analogously, as a preprocessing step, each positive edge $\tuple{x,r,y}$ of $G$ (respectively, $G'$) is assigned to a bucket identified by $E_r(x,y)$ (respectively, $E'_r(x,y)$). Then, during the construction of $E''$, each positive edge $\tuple{\tuple{y,y'},\tuple{x',y,r}}$ in $\Vmin \times \Vmax$ of $\mN$ with $E''(\tuple{y,y'},\tuple{x',y,r}) = E'_r(x',y')$ is inserted into the bucket identified by $E'_r(x',y')$, and each positive edge $\tuple{\tuple{y,y'},\tuple{x,y',r}}$ with $E''(\tuple{y,y'},\tuple{x,y',r}) = E_r(x,y)$ is inserted into the bucket identified by $E_r(x,y)$. Each bucket is implemented as a list. Again, no sorting is required within buckets; it suffices to sort the buckets according to their keys and then concatenate them. This preprocessing and sorting can be performed in time $O(m \log m)$, which is $O(m \log n)$. Hence, the construction of $E''$, together with sorting the positive edges in $\Vmin \times \Vmax$, can be done in time $O(mn)$.

By Theorem~\ref{theorem: JHDLZ}, the execution of Algorithm~\ref{alg1} on $\mN$ takes time $O(mn + n^2)$.

The construction of $M|_{\Vmin}$ from $M$ takes time $O(n^2)$.

Summing up, with such an implementation, Algorithm~\ref{alg2} runs in time $O((m+n)n)$.
\myend
\end{proof}


\section{Implementation and performance tests}
\label{section: impl and performance}

As mentioned earlier, we have implemented Algorithm~\ref{alg1} in Python and made the source code publicly available~\cite{CompMinimaxG-prog}. The package~\cite{CompMinimaxG-prog} includes the module {\em CompMinimaxG.py}, which defines the class {\em MinimaxNet} with the methods {\em computeGreatestMarkingG} and {\em getMarking}. The former computes the greatest correct marking of the given minimax net (represented by the {\em self} object) under the G\"odel semantics, whereas the latter extracts this marking and returns it in a simplified form. Examples illustrating the use of the class {\em MinimaxNet} are provided in the executable part of the module (i.e., inside the statement \verb|if __name__ == "__main__"|).

The package~\cite{CompMinimaxG-prog} also includes an implementation of Algorithm~\ref{alg2} previously developed for~\cite{FDSML}. In particular, the module {\em CompMinimaxG.py} also provides the function
{\small
\begin{verbatim}
    computeGreatestBisimulationOrSimulationOrDirectedSimulationG(G, G', purpose)
\end{verbatim}
}
\noindent
which computes the greatest fuzzy bisimulation, simulation, or directed simulation between two finite fuzzy graphs $G$ and $G'$ under the G\"odel semantics, according to the value of the parameter {\em purpose}, which may take one of the values ``bisimulation'', `simulation'', and ``directed simulation''. Examples illustrating the use of this function are also provided in the executable part of the module.

Input minimax nets for Algorithm~\ref{alg1} and input fuzzy graphs for Algorithm~\ref{alg2} can be stored in text files using a simple and intuitive format. For example, the minimax net shown in Figure~\ref{fig: HJRKA} is stored in the file {\em 3.in} of~\cite{CompMinimaxG-prog}, whose contents are obtained by concatenating the columns of the following table:
\begin{center}
\small
\begin{tabular}{|l|l|l|}
\hline
\begin{tabular}{l}
v1 0.5 \\
v2 1 \\
v3 0.9 \\
v4 1 \\
\ 
\end{tabular}
& 
\begin{tabular}{l}
v1 u 0.9 \\
v2 u 0.6 \\
v3 u 1 \\
u v4 1 \\
\ 
\end{tabular}
& 
\begin{tabular}{l}
w v1 0.6 \\
w v2 0.9 \\
v2 w 0.7 \\
w v3 0.6 \\
v4 w 0.8
\end{tabular}
\\
\hline
\end{tabular}
\end{center}

Similarly, the fuzzy graph $G$ depicted in Figure~\ref{fig: HGNBS} and specified in Example~\ref{example: HGNBS} is stored in the text file {\em g6.in} of~\cite{CompMinimaxG-prog} with the following contents:
{\small
\begin{verbatim}
    u p 0.9
    v q 0.4
    
    u u r 0.8
    u v r 0.7
    v v r 0.6
\end{verbatim}
}

\begin{figure}[th!]
\centering
\includegraphics[scale=0.80]{results2a-plot}

\includegraphics[scale=0.80]{results2b-plot}
\caption{Performance of the implementations of Algorithms~\ref{alg1} and~\ref{alg2} on random inputs\label{fig: HGFJS}}
\end{figure}

We evaluated the performance of the implementations of Algorithms~\ref{alg1} and~\ref{alg2}. All experiments were performed using a standard user account on a virtualized Linux server running Debian GNU/Linux~12 (Bookworm). The server was equipped with 64 logical Intel Broadwell CPU cores operating at approximately 2.2~GHz and 184~GiB of RAM. The user account was subject to the standard resource limits imposed by the hosting environment.

For Algorithm~\ref{alg1}, experiments were performed using randomly generated minimax nets $\mN = \tuple{\Vmin, \Vmax, E, L}$ satisfying the following properties: $|\Vmin| = |\Vmax|$, all nodes have the same outgoing degree, and all positive fuzzy values assigned in the net belong to the set $\{1/100, \ldots, 100/100\}$. Three classes of minimax nets were considered: sparse, semi-dense, and dense, for which the outgoing degree of each node in a minimax net is equal to $3$, $\lfloor\log n\rfloor$, and $\lfloor n/10\rfloor$, respectively, where $n = |\Vmin| = |\Vmax|$. Each experiment was repeated 30 times, and the average running time together with the corresponding standard deviation was measured over these repetitions. The performance results are shown in the upper part of Figure~\ref{fig: HGFJS}, where the horizontal axis represents $|\Vmin \cup \Vmax| + |E|$, i.e., the input size, and the vertical axis represents the running time in milliseconds.

For Algorithm~\ref{alg2}, experiments were performed using randomly generated fuzzy graphs $G = \tuple{V, E, L, \SV, \SE}$ and $G' = \tuple{V', E', L', \SV, \SE}$ satisfying the following properties: $|V| = |V'| = n$, $|E| = |E'| = m$, $|\SV| = |\SE| = 10$, all positive fuzzy values assigned in the graphs belong to the set $\{1/100, \ldots, 100/100\}$, and the total numbers of positive vertex labels in $G$ (i.e., $\#\{\tuple{x,p} \in V \times \SV \mid$ $L(x)(p) > 0\}$) and in $G'$ (i.e., $\#\{\tuple{x',p} \in V' \times \SV \mid$ $L'(x')(p) > 0\}$) are both equal to~$3n$. Three classes of fuzzy graphs were considered: sparse, semi-dense, and dense, where $m = 3n$, $m = n\lfloor\log n\rfloor$, and $m = n\lfloor n/10\rfloor$, respectively. Each experiment was repeated 30 times, and the average running time together with the corresponding standard deviation was measured over these repetitions. The performance results are shown in the lower part of Figure~\ref{fig: HGFJS}, where the horizontal axis represents the product $mn$, while the vertical axis represents the running time in milliseconds.

The experiments can be reproduced by executing the modules {\em experiments2.py} and {\em process$\_$results2.py} (in this order) of the package~\cite{CompMinimaxG-prog}. The former generates the files {\em results2a.txt} and {\em results2b.txt}, whereas the latter processes these files and produces the plots {\em 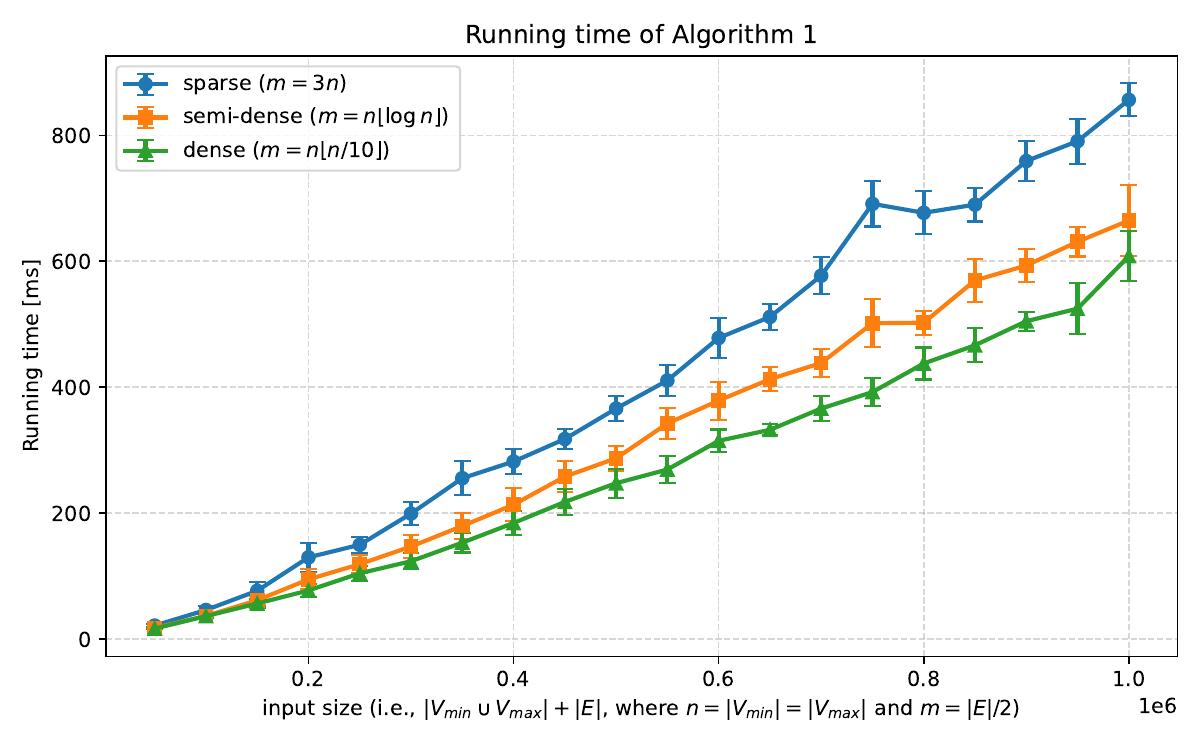} and {\em 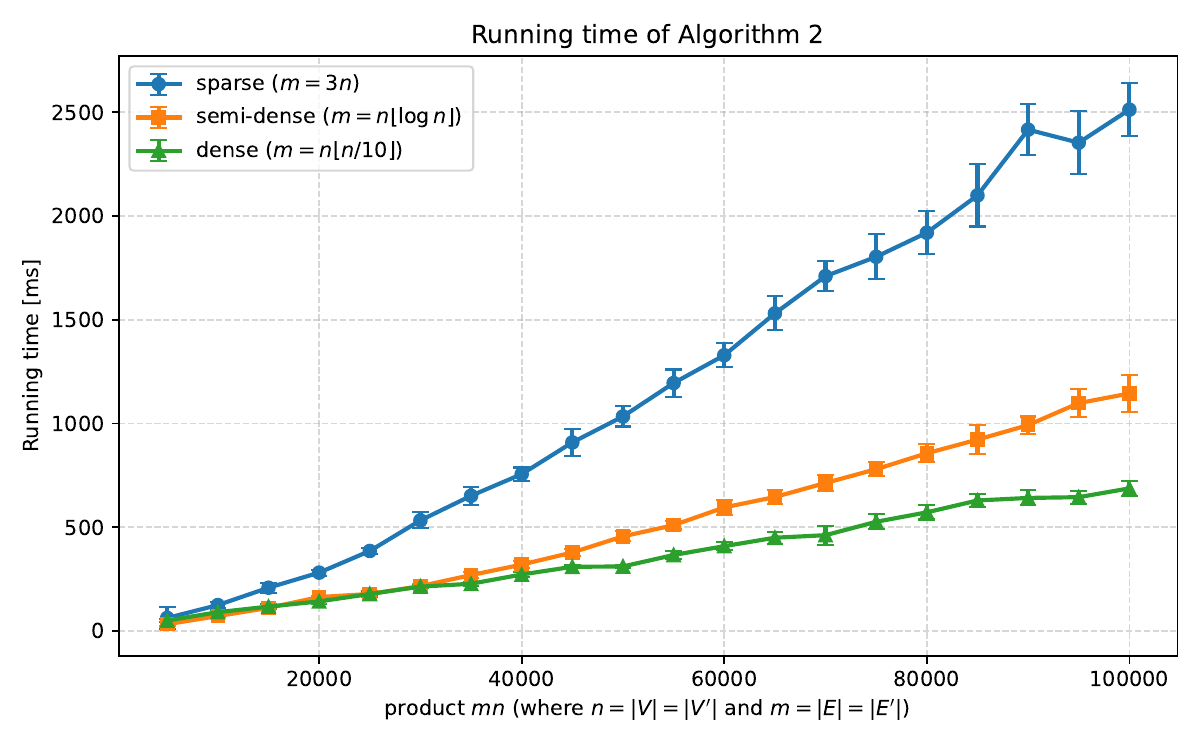}, which are used in Figure~\ref{fig: HGFJS}.

The plots in Figure~\ref{fig: HGFJS} show that the experimental results are consistent with the time complexity bounds established in Theorems~\ref{theorem: JHDLZ} and~\ref{theorem: JHDKJ}, namely, $O(m+n)$ and $O((m+n)n)$, respectively.

\section{Related work}
\label{section: related work}

The work~\cite{DBLP:journals/tfs/NguyenMS23}, which serves as the starting point for the present paper, has already been discussed in the introduction.
In this section, we review other related works on computing bisimulations, simulations, directed simulations, and related notions.

We first consider crisp graph-based structures.
\begin{itemize}
\item
Rather than computing the greatest bisimulation between two different graph-based structures, much of the literature has focused on computing the partition corresponding to the bisimilarity relation (i.e., the greatest auto-bisimulation) of a given graph-based structure.
Notable results include Hopcroft's algorithm~\cite{Hopcroft71}, which computes the state minimization of a deterministic finite (crisp) automaton in time $O(n\log n)$, and the algorithms of Paige and Tarjan~\cite{PaigeT87}, which compute the coarsest partition of a finite (crisp) graph in time $O((m+n)\log n)$ under both the stability and size-stability settings.
These settings can be referred to as the settings without counting successors and with counting successors, respectively.
The latter corresponds to the setting with graded modalities in modal logics and qualified number restrictions in description logics.
As noted in~\cite{PaigeT87}, Cardon and Crochemore~\cite{DBLP:journals/tcs/CardonC82} had previously proposed an algorithm for the latter setting with the same time complexity order.

\item
Bloom and Paige~\cite{BloomP95} and Henzinger et al.~\cite{HenzingerHK95} proposed the first algorithms with time complexity \mbox{$O((m+n)n)$} for computing the greatest auto-simulation of a finite (crisp) transition system and a finite (crisp) graph, respectively.
Both algorithms are formulated for the setting without counting successors.
Further work on computing simulations for crisp graph-based structures in this setting includes~\cite{tocl/BustanG03,jar/GentiliniPP03,iandc/RanzatoT10}, motivated, among other things, by reducing space complexity.
In~\cite{CompBSDLP}, we presented an algorithm with time complexity $O((m+n)^2n^2)$ for computing the greatest auto-simulation of a finite labeled (crisp) graph in the setting with counting successors.

\item
In~\cite{CompBSDLP}, we also presented algorithms with time complexity $O((m+n)n)$ and \mbox{$O((m+n)^2n^2)$} for computing the greatest directed auto-simulation of a finite labeled (crisp) graph in the settings without and with counting successors, respectively.
\end{itemize}

We now turn to fuzzy graph-based structures.
The computation of crisp simulations and bisimulations for fuzzy graph-based structures has been studied in
\cite{DBLP:journals/jifs/Nguyen22,DBLP:journals/ijar/NguyenT24,DBLP:journals/fss/WuCBD18,StanimirovicSC2019,NGUYEN2025109194},
whereas the computation of fuzzy simulations and bisimulations has been studied in
\cite{TFS2020,DBLP:journals/isci/Nguyen23,CiricIJD12,IgnjatovicCS15,MicicJS18,DBLP:journals/tfs/QiaoZF23,NGUYEN2025109194}.
The computation of approximate fuzzy simulations and bisimulations has been investigated in
\cite{STANKOVIC2025109299,MICIC2026109923,QIAO2023108533,fss/StanimirovicN0025}.
Notable algorithms with the best currently known time complexity are as follows:
\begin{itemize}
\item
In~\cite{DBLP:journals/isci/Nguyen23}, the author presented an algorithm with time complexity \mbox{$O((m\log l+n)\log n)$} for computing the fuzzy partition corresponding to the greatest fuzzy auto-bisimulation of a finite fuzzy labeled graph under the G\"odel semantics, where $l$ denotes the number of distinct edge degrees in the graph.

\item
In~\cite{DBLP:journals/ijar/NguyenT24}, the authors presented an algorithm with time complexity \mbox{$O((m\log l+n)\log n)$} for computing the crisp partition corresponding to the greatest crisp auto-bisimulation of a finite fuzzy labeled graph.

\item
In~\cite{TFS2020}, the authors presented an algorithm with time complexity \mbox{$O((m+n)n)$} for computing the greatest fuzzy simulation between two finite fuzzy interpretations in the fuzzy description logic \fALC under the G\"odel semantics.

\item
In~\cite{DBLP:journals/jifs/Nguyen22}, the author presented an algorithm with time complexity \mbox{$O((m+n)n)$} for computing the greatest crisp simulation between two finite fuzzy labeled transition systems.
\end{itemize}

In~\cite{DBLP:journals/corr/abs-2012-01845}, which extends~\cite{DBLP:journals/jifs/Nguyen22}, the author also presented an algorithm with time complexity \mbox{$O((m+n)n)$} for computing the greatest crisp directed simulation between two finite fuzzy labeled transition systems.

Finally, note that the general method for computing the greatest fuzzy directed simulation between two finite fuzzy Kripke models over any residuated lattice originates from~\cite{FDSML}, as discussed in Section~\ref{section: alg2}.
However, Algorithm~\ref{alg2}, together with the implementation details given in the proof of Theorem~\ref{theorem: JHDKJ} that yield the time complexity \mbox{$O((m+n)n)$}, is a contribution of the present paper.

\section{Conclusions}
\label{section: conc}

We have designed Algorithm~\ref{alg1} for computing the greatest correct marking of a finite fuzzy minimax net over the G\"odel structure. The algorithm runs in linear time with respect to the number of nodes and positive edges in the input net. As an example application of Algorithm~\ref{alg1}, we have derived the first algorithm with time complexity $O((m+n)n)$ for computing the greatest fuzzy directed simulation between two finite fuzzy graphs over the G\"odel structure, where $n$ and $m$ denote the total numbers of vertices and positive edges, respectively, in the input graphs. 

A problem closely related to the open problem raised in~\cite{DBLP:journals/tfs/NguyenMS23} is whether there exist polynomial-time algorithms for computing the greatest correct marking of a finite fuzzy minimax net over the product or \L{}ukasiewicz structure. This problem remains challenging and open, and we leave it for future work.


\biboptions{sort&compress}
\bibliography{BSfDL}
\bibliographystyle{elsarticle-harv}


\end{document}